\DeclareMathOperator\erf{erf}
\newacronym{ae}{AE}{auto encoder}
\newacronym{auc}{AUC}{area under the curve}
\newacronym{ap}{AP}{access point}
\newacronym{ce}{CE}{cross entropy}
\newacronym{cdf}{CDF}{cumulative distribution function}
\newacronym{eda}{EDA}{estimated distance approach}
\newacronym{fa}{FA}{false alarm}
\newacronym{gnss}{GNSS}{global navigation satellite system}
\newacronym{irlv}{IRLV}{in-region location verification}
\newacronym{kl}{K-L}{Kullback-Leibler}
\newacronym{ls}{LS}{least-squares}
\newacronym{llr}{LLR}{log likelihood-ratio}
\newacronym{los}{LOS}{line of sight}
\newacronym{glrt}{GLRT}{generalized likelihood ratio test}
\newacronym{gsm}{GSM}{global system for mobile communications}
\newacronym{lssvm}{LS-SVM}{least squares SVM}
\newacronym{md}{MD}{mis-detection}
\newacronym{ml}{ML}{machine learning}
\newacronym{mlp}{MLP}{multi-layer perceptron}
\newacronym{mmse}{MMSE}{minimum mean square error}
\newacronym{mse}{MSE}{mean squared error}
\newacronym[\glslongpluralkey={neural networks}]{nn}{NN}{neural network}
\newacronym{np}{N-P}{Neyman-Pearson}
\newacronym{oclssvm}{OCLSSVM}{one-class least-square \ac{svm}}
\newacronym{pdf}{PDF}{probability density function}
\newacronym{pso}{PSO}{particle swarm optimization}
\newacronym{roc}{DET}{detection error tradeoff}
\newacronym{det}{DET}{detection error tradeoff}
\newacronym{roi}{ROI}{region of interest}
\newacronym{rss}{RSS}{received signal strength}
\newacronym{std}{std}{standard deviation}
\newacronym[\glslongpluralkey={support vector machines}]{svm}{SVM}{support vector machine}
\newacronym{ue}{UD}{user device}
\def\expandafter\quote\expandafter{\quote\setstretch{1}~\\}
\newcommand{\sot}[1]{}
\newcounter{revc}
\makeatletter \zref@newprop{revcontent}{} \zref@addprop{main}{revcontent}
\newcommand{\revi}[2]{%
\zref@setcurrent{revsec}{\thesection}%
\zref@setcurrent{revcontent}{#2}%
\refstepcounter{revc}%
\label{#1}
\zlabel{#1}%
#2 }
\newcommand{\revinu}[2]{%
\zref@setcurrent{revsec}{\thesection}%
\zref@setcurrent{revcontent}{#2}%
\refstepcounter{revc}%
\zlabel{#1}%
\label{#1}
#2 }
\newcommand{\revr}[2]{%
\zref@setcurrent{revsec}{\thesection}%
\zref@setcurrent{revcontent}{#2}%
\refstepcounter{revc}%
\zlabel{#1}%
\label{#1} \sot{#2}} \makeatother
\newcommand{\ie}{i.e., }
\newcommand{\wrt}{with respect to }
\newcommand{\Exp}[1]{\mathbb{E}\left[#1\right]}
\newcommand{\ai}{\bm{a}^{(i)}}
\DeclareMathOperator{\sign}{sign}
\newcommand{\E}{E}
\newtheorem{theorem}{Theorem}
\newtheorem{lemma}{Lemma}
\newtheorem{corollary}{Corollary}
\title{Machine Learning For In-Region Location Verification In Wireless Networks}
\author{Alessandro Brighente, Francesco Formaggio,\\  Giorgio Maria Di Nunzio, and  Stefano Tomasin \\
{\bf Accepted for publication in the IEEE Journal on Selected Areas in Communications (JSAC), May 2019.}
 }
\begin{document}

\maketitle

\sloppy

\begin{abstract}
\footnote{This work has been submitted to the IEEE for possible publication. Copyright may be transferred without notice, after which this version may no longer be accessible}\Ac{irlv} aims at verifying whether a user is inside a \ac{roi}. In wireless networks, \ac{irlv} can exploit the features of the channel between the user and a set of trusted access points. In practice, the channel feature statistics is not available and we resort to \ac{ml} solutions for \ac{irlv}. We first show that solutions based on either \acp{nn} or \acp{svm} and typical loss functions are \ac{np}-optimal at learning convergence for sufficiently complex learning machines and large training datasets . Indeed, for finite training, \ac{ml} solutions are more accurate than the \ac{np} test based on estimated channel statistics. Then, as estimating channel features outside the \ac{roi} may be difficult, we consider one-class classifiers, namely auto-encoders \acp{nn} and one-class \acp{svm}, which however are not equivalent to the \acf{glrt},  typically replacing the \ac{np} test in the one-class problem.  Numerical results support the results in realistic wireless networks, with channel models including path-loss, shadowing, and fading.
\end{abstract}

\begin{IEEEkeywords}
Auto-encoder, in-region location verification, machine learning, neural network, support vector machine.
\end{IEEEkeywords}

\glsresetall
\clearpage
\section{Introduction}
\label{sec:intro}

Location information without verification gives ample opportunities to attack a service granting system (with  applications in sensor networks \mbox{\cite{Zeng-survey, 8376254, wei2013}}, the Internet of things (IoT) \cite{7903611}, and geo-specific encryption \cite{quaglia}). In fact, the location information can be easily manipulated either by tampering the hardware/software reporting the location or by spoofing the \ac{gnss} signal outside the user device. In this context, location verification systems aim at verifying the position of mobile devices in a communication network.  In order to verify the location, the features of the wireless channel over which communications occur can be exploited. An example is given by {\cite{li2010security}}, where the \ac{rss} is used to estimate the distance between the user and other network nodes.

\revi{rev2lit}{Location verification can be classified into two main sub-problems: \emph{single location verification} and {\it \ac{irlv}}. The \emph{single location verification} problem aims at verifying if a user is in a specific point. A solution is obtained by comparing some channel features of the user under test with those of a trusted user that was in the same location in the past. }\revi{rev3cit}{In some works, this approach is used to verify if different messages come from the same user, i.e., as a {\it user authentication} mechanism (see \protect{\cite{7270404}} for a survey): in \protect{\cite{Baracca-12}}, channel features are affected by noise with known  statistics; whereas, in \protect{\cite{7398138}}, statistics are unknown and a learning strategy is adopted.  The {\it \ac{irlv}} aims at verifying if a user is inside a \ac{roi} \cite{Zeng-survey}. }\revi{rev2lit2}{Solutions include distance bounding techniques with rapid exchanges of packets between the verifier and the prover \cite{Brands, singelee2005location}, also in the context of vehicular ad-hoc networks {\cite{song2008secure}}. Other solutions use radio-frequency and ultrasound signals {\cite{Sastry}}, or anchor nodes and transmit power variations {\cite{Vora}}. More recently, a delay-based verification technique leveraging geometric properties  has been proposed in {\cite{7145434}}. Some of the proposed techniques partially neglect wireless propagation phenomena (such as shadowing and fading) that corrupt the distance estimates {\cite{Brands,Sastry} and \cite{Vora}}. Other approaches assume specific channel statistics that may be not accurate due to changing environment conditions {\cite{quaglia}}.} 
\revi{attack1}{Two types of attacks to \ac{irlv} have been considered in the literature, where the attacker claims a false location \mbox{\cite{singelee2005location,song2008secure,Sastry}}  or tampers with the signal power making it coherent with the fake claimed position  \mbox{\cite{Vora,yan2016location}} and \mbox{\cite{li2010security}}.}

Focusing on \ac{irlv}, if the statistics of the channels to devices both inside and outside the \ac{roi} is known to the network, the \ac{np} theorem {\cite{Cover-book}} provides the most powerful test for a given significance level. When  the channel statistics is not available, a two-step solution would be to a) estimate the channel statistics and b) apply the \ac{np} theorem on the estimated statistics. However, as we also confirm in this paper, this approach may not be accurate. Alternatively, \ac{ml} techniques can be used. For example, in \cite{xiao-2018}, the single location verification problem is solved without assumptions  on the channel model by applying logistic regression. In \cite{tian2015robust}, the objective is to determine the position of a user inside a building by means of a multi-class \ac{svm}. Nevertheless, neither \cite{xiao-2018}  nor \cite{tian2015robust} compare the performance of their \ac{ml} approaches with that of the \ac{np} test.

In this paper, we remove the channel knowledge assumption and study two \ac{ml} solutions for \ac{irlv} based on \acp{nn} and \ac{svm}. In particular, we investigate \acp{mlp} that use either the \ac{ce} or the \ac{mse} as loss functions, and the \ac{ls} version of \acp{svm}. We show that these approaches are \ac{np}-optimal for sufficiently complex machines and sufficiently large training datasets. \revi{rev11a}{The obtained asymptotic results are applicable also to elaborate ML solutions, such as deep learning NNs, that can still be seen as parametric functions, although more complex than shallow NNs.}

%Two specific security issues are then addressed. 
Since it may be difficult to obtain training data from the space outside the \ac{roi}, as it can be vast or not well defined, we explore the one-class classification problem under the knowledge of legitimate channel statistics, and we conclude that conventional \ac{ml} solutions based on both the \ac{ae} and the one-class \ac{svm} do not coincide with the \ac{glrt}, even for large training datasets.  
%\revi{rev14b}{Lastly, we consider  the use of (one-class) classifiers to accelerate the success of a sequence of attacks made from various positions.} 
Numerical results support the theoretical results in a realistic wireless network scenario, including path-loss, shadowing, and fading. We show that in a simple scenario a shallow \ac{nn} and a relatively small training dataset already provide optimal performance. We also show that one-class \ac{irlv}  achieves a performance comparable to that of two-class \ac{irlv}.

The contributions of this paper are summarized hereby:
\begin{enumerate}
    \item we propose physical-layer \ac{irlv} solutions based on \ac{ml} techniques that are suitable to operate with inaccurate estimates, even when their statistics are not known, thus being model-less;
    \item we show that, in asymptotic training and complexity conditions, \ac{nn} and \ac{lssvm} at convergence achieve the error probabilities of the \ac{np} test, which is most powerful for a given significance level.
\end{enumerate}
\revi{lit2}{About point 1, shadowing and fading effects on \ac{irlv} have not been much considered in the literature: for example, in \protect{\cite{Vora},} \ac{rss} estimates are assumed to be perfect; in {\cite{Sastry}}, agents are assumed to communicate over an  error-free channel (san assumption used for most distance-bounding protocols {\cite{singelee2005location,song2008secure}}). In {\cite{li2010security}} and {\cite{yan2016location}}, shadowing is taken into account, while fading is neglected, and channel statistics is assumed to be known. All these simplifying assumptions are not required by the \ac{ml} models studied in this paper.} \revi{framework1}{Indeed, we also consider an accurate wireless channel model (in Section II), but only in order to explain the complexity of the techniques in the literature (including the \ac{np} test) and, consequently, justify the use of \ac{ml}. Still,  our solution and theoretical results can be applied on any channel statistics and various features (see \cite{Zeng-survey} for a survey), even including  measurements from external sensors.}

The paper is organized as follows: Section~II introduces the system model for the \ac{irlv} problem, with an example of wireless channel, and recall two reference \ac{irlv} techniques. Section~III describes the proposed \ac{ml} solutions and presents the theoretical results on their asymptotic performance. In Section~IV, we propose the one-class classification approaches. Numerical results are shown and discussed in Section~V. Conclusions are outlined in Section~VII.

The following notation is used throughout the paper: bold lowercase letters refer to vectors, whereas bold uppercase letters refer to matrices, $\mathbb{E}[\cdot]$ and $\mathbb P[\cdot]$ denote the expectation and probability operators, respectively, $(\cdot)^T$ denotes the transpose operator, $\ln x$, and $\log_{10} x$ denote the natural-base and base-10 logarithms, respectively.

\section{System Model}

% \begin{figure}
%     \centering
%     \includegraphics[width=8.5cm]{irlv.jpg}
%     \caption{The considered \ac{irlv} scenario.}
%     \label{fig1}
% \end{figure}

We consider a wireless network  with $N_{\rm AP}$ \acp{ap} covering the area $\mathcal{A}$ over a plane. We propose a \ac{irlv} system to determine if a \ac{ue} is transmitting from within an {\em authorized} \ac{roi} $\mathcal{A}_0$ inside  $\mathcal{A}$, and we define $\mathcal{A}_1=\mathcal{A} \setminus \mathcal{A}_0$ as its complementary region. The verification process exploits the location dependency of the features of the channel between the \ac{ue} and the \acp{ap}. \revi{revPHASE}{For example, we consider as feature the channel power attenuation (of a narrowband transmission), similarly to \protect{\cite{li2010security,Vora}} and \protect{\cite{yan2016location}}. Indeed, other features can be exploited, such as the phase or the wideband impulse response (see \protect{\cite{7270404}} for a survey): our solutions readily apply also to these cases, as we do not make special assumptions on the channel model for their design and analysis.}

We assume that the \ac{ue} transmits a pilot signal with fixed power, known at the \acp{ap}, from which the \acp{ap} can measure the received power and estimate the channel attenuation. We assume that the attenuation estimation is perfect, i.e., not affected by noise or interference, thanks to a sufficiently long pilot signal.

\subsection{Channel Model}\label{sec:chMod}

We now describe a widely adopted wireless channel model to clarify the challenge faced by an \ac{irlv} based on the attenuation estimate. \revi{WiFi2}{In particular, we consider the general channel \cite{3gpp} model that covers a large frequency range from $800$~MHz to 2.5~GHz, suitable for wireless local area networks (WLANs) and IoT, where \ac{irlv} is typically applied.} Let $a^{(n)}$ be the attenuation incurred over the channel between the \ac{ue} and \ac{ap} $n$, including the effects of path-loss, shadowing, and fading. In particular, by assuming a Rayleigh model for fading we have \begin{equation}
    g^{(n)} = (\sqrt{a^{(n)}})^{-1} \sim {\mathcal N}\left(0,\sigma_{g,n}^2\right),
\end{equation}
where ${\mathcal N}(m,\sigma^2)$ denotes a Gaussian random variable with mean $m$ and variance $\sigma^2$. Moreover, due to shadowing we have 
\begin{equation}
    (\sigma_{a,n}^2)_{\rm dB} =  -10\log_{10}\sigma_{g,n}^2={P_{\rm PL}^{(n)}} + s,
\end{equation}
where $P_{\rm PL}^{(n)}$ is the path-loss coefficient in dB, and $s \sim \mathcal{N}(0,\sigma_{s,{\rm dB}}^2)$ is the shadowing component.   Shadowing components of two \acp{ue}  at  positions $\bm{x}_1$ and $\bm{x}_2$   have correlation $\sigma_{s,{\rm dB}}^2e^{-\frac{L(\bm{x}_1,\bm{x}_2)}{d_c}}$, where $d_c$ is the shadowing decorrelation distance {\cite[Section 2.7]{goldsmith2005}.} 

%Both path-loss and shadowing models are  derived from \cite{3gpp}. 
Let us denote as $\bm{x}_{\rm AP}^{(n)} =(X_{\rm AP}^{(n)},Y_{\rm AP}^{(n)})$ the position of  \ac{ap} $n= 1, \ldots, N_{\rm AP}$. For a \ac{ue} located at $\bm{x}_{\rm UD}=(X_u,Y_u)$, its distance from \ac{ap} $n$ is $L(\bm{x}_{\rm UD},\bm{x}_{\rm AP}^{(n)}) = \sqrt{||\bm{x}_{\rm UD}-\bm{x}_{\rm AP}^{(n)}||_2^2}$. For the path-loss, \cite{3gpp} provides two scenarios: \ac{los} and non-\ac{los}. For a \ac{los} link, the path-loss in dB is modelled as
\begin{equation}\label{eq:los}
    P_{{\rm PL},{\rm LOS}}\left(L(\bm{x}_{\rm UD},\bm{x}_{\rm AP}^{(n)})\right) = 10 \nu \log_{10}\left(\frac{f 4\pi L(\bm{x}_{\rm UD},\bm{x}_{\rm AP}^{(n)})}{c}\right),
\end{equation}
where $\nu$ is the path-loss coefficient, $f$ is the carrier frequency and $c$ is the speed of light. 
For a  non-\ac{los} link, the path-loss coefficient in dB is defined as
\begin{equation}\label{eq:nlos}
\begin{split}
    P_{{\rm PL},  {\rm NLOS}}&\left(L(\bm{x}_{\rm UD},\bm{x}_{\rm AP}^{(n)})\right) = 40 (1 - 4 \cdot 10^{-3} h_ {\rm AP}^{(n)})\log_{10}\left (\frac{L(\bm{x}_{\rm UD},\bm{x}_{\rm AP}^{(n)})}{10^3}\right ) +   \\
    & - 18 \log_{10} h_{\rm AP}^{(n)}  +21\log_{10}\left(\frac{f}{10^6}\right) + 80,
    \end{split}
\end{equation}
where $h_{\rm AP}$ is the AP height. Path-loss and shadowing components (thus $\sigma_{a,n}^2$) are assumed to be time-invariant, while the fading (thus attenuation $a^{(n)}$) is independent for each attenuation estimate. \revi{avg_1}{Fading does not give information on the \ac{ue} location; therefore it is a disturbance for \ac{irlv}. However, by performing $k_f$ estimates of the attenuation in a short time $a_j^{(n)}$, $j=1, \ldots,k_f$, and averaging them, we obtain the new attenuation estimate}
\begin{equation}
    a^{(n)}_{\Sigma} = \frac{1}{k_f}\sum_{j=1}^{k_f} a_j^{(n)}.
\end{equation}

\subsection{\ac{irlv} With Known Channel Statistics}\label{sec:auth}

\ac{irlv} can be seen as a hypothesis testing problem between the two hypotheses (events):
\begin{itemize}
    \item $\mathcal{H}_0$: the \ac{ue} is transmitting from area $\mathcal{A}_0$;
    \item $\mathcal{H}_1$: the \ac{ue} is transmitting from area $\mathcal{A}_1$.
\end{itemize}
This is also denoted as a two-class classification problem. Given vector $\bm{a} = [a^{(1)}, \ldots, a^{(N_{\rm AP})}]$ collecting the attenuation estimates at all the \acp{ap}, we aim  at determining the most likely hypothesis, in order to perform \ac{irlv}. Let $\mathcal H \in  \{\mathcal{H}_0, \mathcal{H}_1\}$ be the state of the \ac{ue}, and $\hat{\mathcal H} \in  \{\mathcal{H}_0, \mathcal{H}_1\}$ the decision taken by the \acp{ap}. We have two possible errors: \acp{fa}, which occur when the \ac{ue}  is classified as outside the \ac{roi}, while being inside it, and \acp{md}, which occur when the \ac{ue}  is classified as inside the \ac{roi}, while being outside of it. We indicate the \ac{fa} probability as $P_{\rm FA} =\mathbb{P}(\hat{\mathcal H} = \mathcal H_1 | \mathcal H = \mathcal H_0)$, and the \ac{md} probability as $P_{\rm MD}=\mathbb{P}(\hat{\mathcal H} = \mathcal H_0 | \mathcal H = \mathcal H_1)$. 
Let $p(\bm{a}|\mathcal{H}_i)$ be the \ac{pdf} of observing the vector $\bm{a}$ given that $\mathcal{H} = \mathcal{H}_i$. The \ac{llr} for the considered hypothesis is defined as 
\begin{equation}\label{eq:lr}
    {\mathcal M}(\bm{a})=\ln \frac{p(\bm{a}|\mathcal{H}_0)}{p(\bm{a}|\mathcal{H}_1)}.
\end{equation}
According to the \ac{np} theorem, the most powerful test is obtained by comparing $\mathcal{M}(\bm{a})$ with a threshold value $\Lambda$, i.e., obtaining the test function
\begin{equation}
\label{eq:oneClassDec}
f^*(\bm{a}) =
\begin{cases}
-1 &\text{if } {\mathcal M}(\bm{a}) \geq \Lambda, \\
1 & \text{if } {\mathcal M}(\bm{a}) < \Lambda,
\end{cases}
\end{equation}
where $f^*(\bm{a})=-1$ corresponds to $\hat{\mathcal{H}}=\mathcal{H}_0$ and $f^*(\bm{a})=1$ corresponds to $\hat{\mathcal{H}}=\mathcal{H}_1$.
\revi{lambda}{Parameter $\Lambda$ must be chosen to obtain a desired significance level, \ie a desired \ac{fa} probability. It can be set either by assessing the \ac{fa} probability through simulations or by inverting, when available, the expression of \ac{fa} probability as a function of $\Lambda$ \protect{\cite[Section 3.3]{Kay-book}}.}

\subsection{Example of \ac{np} Test}\label{sec:los}
\begin{figure}
    \centering
    \includegraphics[width=6cm]{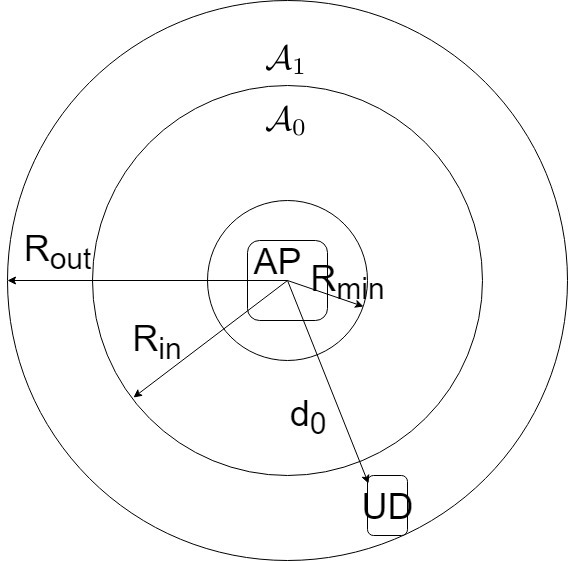}
    \caption{Simplified scenario with a single \ac{ap} located at the center of a circular \ac{roi}.}
    \label{fig:simpScen}
\end{figure}
\revi{simpleScen}{As example of application of the \ac{np} test we consider the scenario of Fig. \ref{fig:simpScen}, wherein area $\mathcal{A}$ is a ring with smaller radius $R_{\rm min}$ and larger radius $R_{\rm out}$ and \ac{roi} $\mathcal{A}_{0}$ is a ring concentric to $\mathcal{A}$, with larger radius $R_{\rm in}$ and smaller radius $R_{\rm min}$. A single \ac{ap} ($N_{\rm AP} =1$) is located at the \ac{roi} center and a \ac{ue} is transmitting from distance $d_0$. We consider two models: a) \emph{uncorrelated fading scenario}, which includes \ac{los} path-loss and spatially uncorrelated fading, and b) \emph{uncorrelated shadowing scenario}, which includes \ac{los} path loss and spatially uncorrelated shadowing. In both case, we consider the \ac{los} model for path-loss.}
\revi{simpleScen2}{In order to compute}\footnote{Note that for a single \ac{ap} vector $\bm{a}$ becomes the scalar $a$.} \revi{simpleScen2_0}{$p(a|\mathcal{H}_i)$, we first observe that the \ac{pdf} of incurring in attenuation $a$ for a user located inside the \ac{roi} is (by the total probability law)}
\begin{equation}\label{eq:prc}
p(a|\mathcal{H}_0) = \int_{R_{\rm min}}^{R_{\rm in}} p\left( a | d_0\right)p(d_0| d_0 \in \mathcal{A}_0) \, {\tt d}d_0,    
\end{equation}
\revi{simpleScen2_1}{where $p(d_0| d_0 \in \mathcal{A}_0)$ is the \ac{pdf} of the \ac{ue} transmitting from distance $d_0$ inside the \ac{roi}. Assuming that \ac{ue} position is uniformly distributed in $\mathcal{A}$, and letting $\Delta_0 = R_{\rm in}^2-R_{\rm min}^2$ and $\Delta_1= R_{\rm out}^2-R_{\rm in}^2$, we have $p(d_0| d_0 \in \mathcal{A}_0) = \frac{2 d_0}{\Delta_0}$ for $d_0 \in [R_{\rm min}, R_{\rm in}]$, and $p(d_0| d_0 \in \mathcal{A}_0) = 0$ otherwise.}
\revi{simpleScen2_2}{A similar expression holds for $p(d_0|d_0 \in \mathcal{A}_1)$.}Closed-form expressions of the  \ac{llr} in the two scenarios are derived in Appendix \ref{sec:llrDer}.

\revi{llrComp}{We can see that obtaining \acp{llr} needs the computation of various integrals evein in this simple case. Therefore, in general (e.g., with either multiple \acp{ap} or correlated shadowing/fading), the \ac{llr} can not be computed in closed-form, thus making  \ac{np} test problematic.}

\subsection{Estimated Distance Approach} \label{seccomp}
\revi{literature_1}{We will compare our \ac{irlv} solutions with the \ac{eda} of {\cite{li2010security}}. In \ac{eda}, first the estimate $\hat{L}(\bm{x}_{\rm UD},\bm{x}^{n}_{\rm AP})$ of the \ac{ue}-\ac{ap} distance is obtained by inverting the path-loss formula, and then the \ac{ue} position is estimated as 
}
\begin{align}
 \hat{\bm x}_{\rm UD} =  \underset{\bm x}{\arg \min} \sum_{n=1}^{N_{\rm AP}} \left(L(\bm{x},\bm{x}_{\rm AP}^{(n)}) - \hat{L}(\bm{x}_{\rm UD},\bm{x}_{\rm AP}^{(n)}) \right)^2.
 \label{probmindist}
\end{align}
\revi{literature_2}{Let $\mathcal B_0$ be the set of points of the border of $\mathcal A_0$, and let the estimated distance of the \ac{ue} from the  border $\mathcal{B}_0$ $d_{\mathcal B} = \min_{\bm{x} \in \mathcal{B}_0} \pm||\hat{\bm x}_{\rm UD} - \bm{x}||$,  where the sign is negative if $\hat{\bm x}_{\rm UD} \in \mathcal A_0$, and positive otherwise. Lastly, $d_{\mathcal B}$ is compared with a suitable threshold $d_\delta$, chosen in order to achieve a desired FA probability, resulitng in $\hat{\mathcal{H}}_{\rm MMSE} = \mathcal{H}_0, \quad \text{if } d_{\mathcal B} < d_\delta$, $\hat{\mathcal{H}}_{\rm MMSE} = \mathcal{H}_1$, otherwise.} \revi{literature_3}{Note that this approach requires the knowledge of the path-loss model (including knowledge of LOS and non-LOS state), which is quite unrealistic. Moreover, the estimator (\ref{probmindist}) is not optimal, since the position error is usually not a Gaussian variable.}

\section{\Ac{irlv} by Machine Learning Approaches}\label{sec:irlvML}

The application of the \ac{np} theorem requires the knowledge of the conditional \acp{pdf} $p(\bm{a}|\mathcal{H}_i)$ at the \acp{ap}, which can be hard to obtain also because a-priori assumptions on them may be quite unrealistic. \revi{supervised}{Therefore, we propose to use a \emph{supervised} \ac{ml} approach} operating in two phases:
\begin{itemize}
    \item {\em Learning phase}: the \acp{ap} collect attenuation vectors from a trusted \ac{ue} moving both inside and outside the \ac{roi}, while the \ac{ue} reports its position to the \acp{ap}. In this way, the \acp{ap} can learn the behaviour of the attenuation in both regions $\mathcal A_0$ and $\mathcal A_1$.
    \item {\em Exploitation phase}: the \acp{ap} verify the location of an un-trusted \ac{ue} by the attenuation's estimate, using the data acquired in the learning phase. 
\end{itemize}

The learning phase works as follows: for each training attenuation vector $\ai$,  $i=1, \ldots, S$, collected during the learning phase, there is an associated label $t_i$, $i=1, \ldots, S$, where $t_i= -1$ if the trusted \ac{ue} is in region $\mathcal{A}_0$, and $t_i = 1$ if the trusted \ac{ue} is in region $\mathcal{A}_1$. Vector $\bm{t}=[t_1, \ldots, t_S]$ collects the labels of all the  attenuation vectors in the training phase. Given these data, the \ac{ap} learns the function  $\hat{t} = f(\bm{a})\in \{-1, 1\}$ that provides the decision $\hat{\mathcal H}$ for each attenuation vector $\bm{a}$. Then, in the exploitation phase, the \ac{irlv} algorithm computes $\hat{t} = f(\bm{a})$ for a new attenuation vector and takes the decision between the two hypotheses. Note that our solution does not explicitly evaluate the \ac{pdf} and the \ac{llr}, but rather directly implements the test function.

\revi{framework2}{We stress the fact that the channel model of Section~\ref{sec:chMod} provides a realistic communication scenario, while the analysis that follows is general, as no specific channel statistics are assumed.}

In the rest of this Section, we briefly review the \ac{mlp} \ac{nn} and the \ac{svm}, describe the learning process and show that in asymptotic conditions (infinite training attenuation vectors, sufficiently complex models, and proper learning phase convergence) both \ac{mlp} and \ac{svm} functions approximate the \ac{llr} function (\ref{eq:lr}).
  
\subsection{Neural Networks}\label{sec:nn}

A \ac{nn} is a $\mathbb{R}^N \to \mathbb{R}^O$ function mapping a set of $N$ real values into $O$ real values. A \ac{nn} processes the input in $Q$ stages, named layers, where the output of one layer is the input of the next layer. Layer $0$ with (column vector) input $\bm{y}^{(0)}$ is denoted as {\em input layer}, layer $Q-1$ with (column vector) output $\bm{y}^{(Q)}$ is denoted as {\em output layer}, while intermediate layers are denoted as {\em hidden layers}. \underline{We denote as $N_L$ the number of hidden layers}.
Each layer $\ell=0, \ldots, Q-1$, has $N^{(\ell)}$ outputs obtained by processing the inputs with $N^{(\ell-1)}$ functions named neurons. The output of the $n^{\rm th}$ neuron of layer $\ell$ is
$
y_n^{(\ell+1)} = \psi^{(\ell)}\left( \bm{w}_n^{(\ell)}\bm{y}^{(\ell)}+b_n^{(\ell)} \right)$,
where the mapping between the input and the outputs is given by the {\em activation function} $\psi^{(\ell)}(\cdot)$. The argument of the activation function is a weighted linear combination, with (row vector) weights $\bm{w}_n^{(\ell)}$, of the outputs $\bm{y}^{(\ell)}$ of the previous layer plus a bias $b_n^{(\ell)}$. We focus here on feedforward \acp{nn}, i.e., without loops between neurons' input and output, an architecture also known as \ac{mlp}. For an in-depth description of \acp{nn} refer for example to \underline{\cite[Chapter 6]{goodfellow}}.\revi{hyper1}{Activation functions are typically chosen before training, while vectors $\bm{w}_n^{(\ell)}$ are adapted according to the \ac{nn} learning algorithm in order to minimize the loss function.}

In our setting, the input of the \ac{nn} is the attenuation vector $\bm{a}$, $N=N_{\rm AP}$, and the output layer has a single neuron ($O=1$) providing as output the scalar $y^{(Q)}_1$. Let $\tilde{t}(\bm{a}) = y^{(Q)}_1$ be the output of the \ac{nn} corresponding to the attenuation vector input $\bm{a}$. A threshold $\lambda$ is used on the \ac{nn} output to obtain the test function
\begin{equation}
\label{testfunNN}
    f(\bm{a}) = \begin{cases}
    1 & \tilde{t}(\bm{a}) > \lambda, \\
    -1 & \tilde{t}(\bm{a}) \leq \lambda.
    \end{cases}
\end{equation}

\revi{lambdaNN}{Parameter $\lambda$ shall be chosen in order to obtain the required \ac{fa} probability. The value of $\lambda$ which guarantees a certain \ac{fa} probability can obtained by simulation, whereas it can not be obtained by inverting the \ac{fa} probability function. This is due to the fact that the \ac{ml} framework is applied when there is no knowledge of the distribution of the variables and hence we can not compute a closed-form expression of the \ac{fa} probability}.\footnote{Notice that usually, for zero-one loss function, literature assumes $\lambda = 0.5$. However, this choice provides the control of neither \ac{fa} nor \ac{md} probabilities.}

\revi{ceNeeded}{Based on the loss function to be optimized during training \acp{nn} can solve different problems, and we consider here two widely used loss functions: \ac{mse} and \ac{ce}. }

\subsection{\ac{nn} MSE Design}
\label{sec: mse_train}
\revi{ceNeeded2}{As optimal hypothesis testing is implemented via the \ac{np} framework, which exploits the knowledge of the \ac{llr} function, we aim at learning this function from data. This problem is referred to as \emph{curve fitting} and it can be solved by training a \ac{nn} via the \ac{mse} loss function \cite{bishop92}.}
According to the \ac{mse} design criterion, the \ac{mlp} parameters are updated in the training phase in order to minimize the \ac{mse} \cite{bishop92}
\begin{equation}\label{eq:mseFunct}
\Gamma = \sum_{i=1}^S |\tilde{t}(\ai) - t_i|^2.
\end{equation}
This is achieved by using the stochastic gradient descent algorithm \underline{\cite[Section~3.1.3]{Bishop2006}}.

In order to prove the connection of \ac{nn} classifier with \ac{mse} design with the \ac{np} test, we first recall the following theorem \cite{Ruck-90}

\begin{theorem}[see \cite{Ruck-90}]
Let $g_0(\bm{a})$ be the Bayes optimal discriminant function
\begin{equation}\label{eq:bayesDisc}
g_0(\bm{a}) = \mathbb{P}(\mathcal{H}=\mathcal{H}_0|\bm{a}) - \mathbb{P}(\mathcal{H}=\mathcal{H}_1|\bm{a}).
\end{equation} 
Then the \ac{mlp} trained by backpropagation via (\ref{eq:mseFunct}) minimizes the error
$\sum_{i=1}^S \left(\tilde{t}(\bm{a}^{(i)}) - g_0(\bm{a}^{(i)})\right)^2$.
\label{teoruck}
\end{theorem}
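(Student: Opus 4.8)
The plan is to reduce this to a standard function-approximation statement about what the MSE loss is actually computing in the limit of many samples, and then quote the universal approximation property. First I would observe that the empirical sum $\Gamma=\sum_{i=1}^S|\tilde t(\ai)-t_i|^2$ is (up to a factor $1/S$) a Monte Carlo estimate of the population risk $\mathbb{E}\big[(\tilde t(\bm a)-t)^2\big]$, where the expectation is over the joint law of $(\bm a,t)$ induced by the training procedure (UE position drawn from $\mathcal A$, label $t=-1$ on $\mathcal A_0$ and $t=+1$ on $\mathcal A_1$, channel feature $\bm a$ drawn from $p(\bm a\mid\mathcal H_i)$). So the substantive claim is about this population functional.

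The key algebraic step is the classical bias–variance-style decomposition of the squared error conditioned on $\bm a$. Writing $\bar t(\bm a)=\mathbb{E}[t\mid\bm a]$, one has the pointwise identity
\begin{equation}
\mathbb{E}\big[(\tilde t(\bm a)-t)^2\,\big|\,\bm a\big]=\big(\tilde t(\bm a)-\bar t(\bm a)\big)^2+\mathrm{Var}(t\mid\bm a),
\end{equation}
and the second term does not depend on $\tilde t$. Hence minimizing the MSE is equivalent to minimizing $\mathbb{E}\big[(\tilde t(\bm a)-\bar t(\bm a))^2\big]$. Next I would compute $\bar t(\bm a)$ explicitly for the $\pm1$ labels: $\bar t(\bm a)=(+1)\mathbb{P}(\mathcal H=\mathcal H_1\mid\bm a)+(-1)\mathbb{P}(\mathcal H=\mathcal H_0\mid\bm a)=-g_0(\bm a)$ with $g_0$ the Bayes discriminant of \eqref{eq:bayesDisc} — or, depending on the sign convention one wishes to match, exactly $g_0(\bm a)$ after relabelling; the point is that the conditional-mean target is an affine function of the posterior difference. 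Substituting back, minimizing $\Gamma$ in the large-sample limit is the same as minimizing $\sum_i(\tilde t(\ai)-g_0(\ai))^2$ (up to sign), which is the statement of the theorem. For the finite-sample version as literally stated in \cite{Ruck-90}, the same decomposition is applied to the empirical sum directly: $\sum_i(\tilde t(\ai)-t_i)^2=\sum_i(\tilde t(\ai)-g_0(\ai))^2+\text{(cross term)}+\text{(term independent of }\tilde t)$, and one argues the cross term vanishes in expectation / in the limit, so the backpropagation trajectory that decreases the left side is driving $\tilde t$ toward $g_0$ in the least-squares sense.

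The main obstacle — and the place where I would be most careful — is the gap between "minimizes the population functional" and "the network actually trained by backpropagation realizes this minimizer." This requires (i) the universal approximation theorem, so that a sufficiently wide/deep MLP can represent a function arbitrarily close to $g_0$, and (ii) an assumption that stochastic gradient descent converges to (near) a global minimizer of $\Gamma$, which is a genuinely nontrivial and generally false statement for nonconvex landscapes but is exactly the kind of idealized "proper learning-phase convergence" hypothesis the paper has already flagged it will assume. I would therefore state the theorem's conclusion as holding under those asymptotic/convergence provisos, lean on \cite{Ruck-90} for the precise form, and keep the burden of the argument on the clean conditional-expectation decomposition above, which is the only part that is really a computation. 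A secondary technical point worth a remark is measurability/integrability of $g_0$ and the finiteness of the second moments of $\bm a$ under the channel model, which hold here because the attenuations have finite variance; this is routine and I would not dwell on it.
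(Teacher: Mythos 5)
Your reconstruction is correct and is essentially the argument of the cited source: the paper itself offers no proof of this theorem (it imports it wholesale from Ruck et al.), and the conditional-expectation decomposition $\mathbb{E}[(\tilde t(\bm a)-t)^2\mid\bm a]=(\tilde t(\bm a)-\mathbb{E}[t\mid\bm a])^2+\mathrm{Var}(t\mid\bm a)$, with the cross term vanishing only in the large-$S$/expected-risk sense, is exactly how that reference establishes it, and you correctly flag the two genuine provisos (universal approximation and convergence of backpropagation to a global minimizer) that the paper also assumes. Your sign observation is also right and worth keeping: under the paper's labelling ($t=-1$ on $\mathcal A_0$, $t=+1$ on $\mathcal A_1$) one gets $\mathbb{E}[t\mid\bm a]=-g_0(\bm a)$, so the network converges to $-g_0$ rather than $g_0$; this is a harmless relabelling that only flips the direction of the threshold inequality in the subsequent Corollary, but the statement as printed glosses over it.
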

\revi{teoruck2}{Hence, Theorem \ref{teoruck} proves that in the presence of i) perfect training, ii) an infinite number of neurons, and iii) convergence of the learning algorithm to the minimum error, the function implemented by \ac{mlp} is the Bayes optimal discriminant function. Now we have the following corollary.}
\begin{corollary}
\label{th:nn_np}
Consider an \ac{mlp} with training converged to the global minimum of the \ac{mse}, by using an infinite number of training points  ($S \rightarrow \infty$). Then the test function (\ref{testfunNN}) provides the \ac{np} test, thus it is the most powerful test.
\end{corollary}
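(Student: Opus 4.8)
The plan is to compose Theorem~\ref{teoruck} with an elementary monotonicity argument linking the Bayes optimal discriminant $g_0$ of (\ref{eq:bayesDisc}) to the \ac{llr} $\mathcal{M}(\bm{a})$ of (\ref{eq:lr}). First I would invoke Theorem~\ref{teoruck} under the corollary's hypotheses: convergence to the global \ac{mse} minimum and $S\to\infty$, together with the unbounded complexity already assumed in Theorem~\ref{teoruck}, force $\sum_i(\tilde t(\ai)-g_0(\ai))^2\to 0$ over a training set that becomes dense in the support of $p(\bm{a})$, so the learned \ac{nn} realizes $\tilde t(\bm{a})=g_0(\bm{a})$ for $p(\bm{a})$-almost every $\bm{a}$.

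Next I would rewrite $g_0$ first through a single posterior and then through $\mathcal{M}$. Since $\mathbb{P}(\mathcal{H}_0|\bm{a})+\mathbb{P}(\mathcal{H}_1|\bm{a})=1$, we have $g_0(\bm{a})=2\mathbb{P}(\mathcal{H}_0|\bm{a})-1$, an affine strictly increasing function of $\mathbb{P}(\mathcal{H}_0|\bm{a})$; and, by Bayes' rule with priors $\pi_i=\mathbb{P}(\mathcal{H}_i)$ and $0<\pi_0<1$,
\begin{equation}
\mathbb{P}(\mathcal{H}_0|\bm{a})=\frac{\pi_0\,p(\bm{a}|\mathcal{H}_0)}{\pi_0\,p(\bm{a}|\mathcal{H}_0)+\pi_1\,p(\bm{a}|\mathcal{H}_1)}=\frac{1}{1+\tfrac{\pi_1}{\pi_0}e^{-\mathcal{M}(\bm{a})}},
\end{equation}
which is strictly increasing in $\mathcal{M}(\bm{a})$. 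Composing, $\tilde t(\bm{a})=g_0(\bm{a})=\phi(\mathcal{M}(\bm{a}))$ for a strictly monotone bijection $\phi$ between the ranges of the two statistics.

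I would then translate this pointwise identity into an equivalence of tests: comparing $\tilde t(\bm{a})$ with $\lambda$ in (\ref{testfunNN}) is, by strict monotonicity of $\phi$, the same as comparing $\mathcal{M}(\bm{a})$ with $\Lambda:=\phi^{-1}(\lambda)$, the boundary event $\{\mathcal{M}(\bm{a})=\Lambda\}$ being assignable either way without changing $P_{\rm FA}$ or $P_{\rm MD}$. Hence the decision regions of (\ref{testfunNN}) coincide with those of the \ac{np} test (\ref{eq:oneClassDec}), so choosing $\lambda$ to meet a prescribed $P_{\rm FA}$ is the same as choosing the matching $\Lambda$, and the \ac{np} theorem (already invoked for (\ref{eq:oneClassDec})) yields optimality at that significance level.

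The main obstacle I anticipate is sign/label bookkeeping rather than the analysis. With labels $t_i\in\{-1,+1\}$ and $t_i=-1$ on $\mathcal{H}_0$, the unconstrained \ac{mse} minimizer is $\mathbb{E}[t|\bm{a}]=\mathbb{P}(\mathcal{H}_1|\bm{a})-\mathbb{P}(\mathcal{H}_0|\bm{a})=-g_0(\bm{a})$, so one must check that $\phi$ is oriented so the inequality directions in (\ref{testfunNN}) and (\ref{eq:oneClassDec}) line up; a strictly \emph{decreasing} reparametrization still gives an equivalent threshold test, but the statement must reflect the correct direction. The remaining care lies in the passage from the finite-sum bound of Theorem~\ref{teoruck} to the pointwise identity $\tilde t=g_0$, and in recording the mild regularity used along the way ($0<\pi_0<1$, and $p(\bm{a}|\mathcal{H}_1)>0$ where needed, so that the posteriors and $\phi$ are well defined).
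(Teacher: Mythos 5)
Your proof is correct and follows essentially the same route as the paper's: invoke Theorem~\ref{teoruck} to identify the converged network output with (up to sign) the Bayes discriminant $g_0$, then observe that $g_0$ is a strictly monotone function of the likelihood ratio, so thresholding one is equivalent to thresholding the other and the Neyman--Pearson theorem supplies optimality at the matched significance level. Your explicit bookkeeping of the label convention ($t_i=-1$ on $\mathcal{H}_0$, hence $\tilde t \to -g_0$ and a reversed inequality direction) is a detail the paper's own proof silently absorbs into the rescaled threshold $\lambda^*$, and is worth retaining.
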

\begin{proof}
From the Bayes rule we have 
\begin{equation}
g_0(\bm{a}) = \frac{p(\bm{a}|\mathcal H_0){\mathbb P}(\mathcal{H}=\mathcal H_0) - p(\bm{a}|\mathcal H_1){\mathbb P}(\mathcal{H}=\mathcal H_1)}{p(\bm{a}|\mathcal H_0){\mathbb P}(\mathcal{H}=\mathcal H_0) + p(\bm{a}|\mathcal H_1){\mathbb P}(\mathcal{H}=\mathcal H_1)}.
\end{equation}
Now, function (\ref{testfunNN}) imposes a threshold $\lambda$ on $g_0(\bm{a})$ and reorganizing terms we obtain $f(\bm{a}) = -1$ when
\begin{equation}
\frac{p(\bm{a}|\mathcal H_0)}{p(\bm{a}|\mathcal H_1)}> \frac{1 + \lambda}{1-\lambda} \, \frac{{\mathbb P}(\mathcal{H}=\mathcal H_1)}{{\mathbb P}(\mathcal{H}=\mathcal H_0)}  = \lambda^*,
\end{equation}
which is equivalent to the \ac{np} criterion, except for a fixed scaling of the threshold.
\end{proof}
\revi{rev11b}{Note that this result is quite general and can be applied to NNs with any number of layers and neurons, and any parameter adaptation approach, as long as the target design function is the MSE. Thus, Corollary 1 is suited also to describe the asymptotic behaviour of elaborate solutions, such as deep learning NNs.}

\subsection{\ac{nn} CE Design}
\label{sec: ce_train}

\revi{ceNeeded3}{Binary classification aims at assigning labels $0$ or $1$ to input vectors. In this case, the usual choice for the loss function is the \ac{ce} between the \ac{nn} output and the true labels of the input vector }\underline{\cite[Chapter~5.2]{Bishop2006}}
\begin{equation}\label{eq:ce}
\chi = -\sum_{i=1}^{S} t_i\ln\tilde{t}(\ai)+(1-t_i)\ln( 1-\tilde{t}(\ai)).
\end{equation}

We now prove the connection of \ac{ce} design criterion with the \ac{np} theorem.
\begin{theorem}
\label{th:nn_np2}
Consider an \ac{mlp} with training converged to the global minimum of the \ac{ce}, by using an infinite number of training points  ($S \rightarrow \infty$). Then the test function (\ref{testfunNN}) provides the \ac{np} test, thus it is the most powerful test.
\end{theorem}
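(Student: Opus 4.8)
The plan is to mirror the argument used for Corollary~\ref{th:nn_np}, replacing the Bayes discriminant function by the a-posteriori probability that the \ac{ce} loss estimates. First I would pass from the empirical loss $\chi$ in (\ref{eq:ce}) to its population counterpart: dividing by $S$ and letting $S\to\infty$, the law of large numbers gives that the normalized \ac{ce} converges to $-\mathbb{E}[\,t\ln\tilde t(\bm a)+(1-t)\ln(1-\tilde t(\bm a))\,]$, where the expectation is over the joint law of $(\bm a,t)$ (here $t$ is the $\{0,1\}$-valued label attached to $\mathcal H_1$ and the output neuron uses a sigmoid, so that $\tilde t(\bm a)\in(0,1)$). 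Conditioning on $\bm a$, this equals $\mathbb{E}_{\bm a}\big[-\mathbb{P}(\mathcal H=\mathcal H_1|\bm a)\ln\tilde t(\bm a)-\mathbb{P}(\mathcal H=\mathcal H_0|\bm a)\ln(1-\tilde t(\bm a))\big]$.

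Second, I would minimize this pointwise in the value $u=\tilde t(\bm a)\in(0,1)$: the map $u\mapsto -q\ln u-(1-q)\ln(1-u)$ is strictly convex with unique minimizer $u=q$, so for every $\bm a$ the optimal output value is $\tilde t(\bm a)=\mathbb{P}(\mathcal H=\mathcal H_1|\bm a)$. Invoking the infinite-neuron (universal approximation) hypothesis together with convergence of training to the global minimum of the \ac{ce}, the function realized by the \ac{mlp} is exactly this a-posteriori probability.

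Finally, I would substitute $\tilde t(\bm a)=\mathbb{P}(\mathcal H=\mathcal H_1|\bm a)$ into the test function (\ref{testfunNN}) and use Bayes' rule, $\mathbb{P}(\mathcal H=\mathcal H_1|\bm a)=\frac{p(\bm a|\mathcal H_1)\mathbb{P}(\mathcal H=\mathcal H_1)}{p(\bm a|\mathcal H_0)\mathbb{P}(\mathcal H=\mathcal H_0)+p(\bm a|\mathcal H_1)\mathbb{P}(\mathcal H=\mathcal H_1)}$; rearranging the inequality $\tilde t(\bm a)>\lambda$ yields $\frac{p(\bm a|\mathcal H_0)}{p(\bm a|\mathcal H_1)}<\frac{1-\lambda}{\lambda}\,\frac{\mathbb{P}(\mathcal H=\mathcal H_1)}{\mathbb{P}(\mathcal H=\mathcal H_0)}$, i.e.\ a comparison of the \ac{llr} $\mathcal M(\bm a)$ of (\ref{eq:lr}) with a fixed threshold, which is precisely the \ac{np} test of (\ref{eq:oneClassDec}) up to a fixed, monotone reparametrization of the threshold (which only relabels the significance level).

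The main obstacle is the interchange between minimizing the functional over the network class and minimizing the integrand pointwise: this is exactly where the ``sufficiently complex machine / infinite neurons'' and ``convergence to the global minimum'' hypotheses do the work, and I would invoke them explicitly rather than prove a quantitative approximation bound. A secondary point worth one sentence is reconciling the $\{-1,1\}$ labelling used elsewhere in the paper with the $\{0,1\}$ convention implicit in (\ref{eq:ce}) and with the sigmoid output range, so that $1-\tilde t(\bm a)$ and its logarithm are well defined.
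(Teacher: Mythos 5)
Your proof is correct and follows the same overall strategy as the paper's: show that at the global minimum of the \ac{ce} the network output equals the a-posteriori class probability, then observe that thresholding a posterior is a monotone reparametrization of thresholding the likelihood ratio. The differences are in how the two steps are discharged. For the first step the paper simply cites \cite[Section~5.2]{Bishop2006} for the fact that the \ac{ce}-trained \ac{mlp} output approximates $\mathbb{P}(\mathcal{H}_0|\bm{a})$, whereas you derive it yourself via the law of large numbers plus the pointwise strict convexity of $u\mapsto -q\ln u-(1-q)\ln(1-u)$ with minimizer $u=q$; your version is more self-contained and makes explicit exactly where the ``infinite training data / sufficiently rich network / global minimum'' hypotheses are consumed (in the interchange of functional and pointwise minimization), which the paper leaves implicit. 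For the second step the paper takes a small detour: it rewrites $\tilde t(\bm a)>\lambda$ as a threshold on $\mathbb{P}(\mathcal{H}_0|\bm{a})-\mathbb{P}(\mathcal{H}_1|\bm{a})$ with $\hat\lambda=2\lambda-1$, identifies this with the Bayes discriminant of Corollary~\ref{th:nn_np}, and then invokes that corollary; you go directly through Bayes' rule to the likelihood-ratio inequality, which is shorter and avoids depending on the \ac{mse} result. Your closing remark about reconciling the $\{-1,1\}$ labels used elsewhere with the $\{0,1\}$ convention forced by (\ref{eq:ce}) and the sigmoid output range is a real issue that the paper glosses over, and is worth the sentence you give it.
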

\begin{proof}
The probability of being in hypothesis $\mathcal{H}_1$ given the attenuation vector $\bm{a}$ satisfies
$
    \mathbb{P}(\mathcal{H} = \mathcal{H}_1|\bm{a} ) = 1- \mathbb{P}(\mathcal{H} = \mathcal{H}_0|\bm{a} ).
$
When training is performed with the \ac{ce} loss function, the output of the \ac{mlp} is the minimum \ac{mse} approximation of the probability $\mathbb{P}(\mathcal{H}_0|\bm{a})$ of being in hypothesis $\mathcal{H}_0$, given the attenuation vector $\bm{a}$ \underline{\cite[Section~5.2]{Bishop2006}}, i.e.,
$
    \tilde{t}(\bm{a}) \approx \mathbb{P}(\mathcal{H}=\mathcal{H}_0|\bm{a})\,,
$
where the approximation is in the \ac{mse} sense. An alternative proof of this is given by \cite{nostro}.

Now, by using the threshold function (\ref{testfunNN}), we have $
    \mathbb{P}(\mathcal{H}=\mathcal{H}_0|\bm{a}) \approx  \tilde{t}(\bm{a}) > \lambda,
$ which can be rewritten as (with $\hat{\lambda}=2\lambda-1)$
% \begin{equation}
%     2\mathbb{P}(\mathcal{H}=\mathcal{H}_0|\bm{a} )-1 \gtrsim \hat{\lambda}
% \end{equation}
\begin{equation}
    \mathbb{P}(\mathcal{H}=\mathcal{H}_0|\bm{a} )-(1-\mathbb{P}(\mathcal{H}=\mathcal{H}_0|\bm{a} )) \gtrsim \hat{\lambda}
\end{equation}
\begin{equation}
\label{lasteq}
    \mathbb{P}(\mathcal{H}=\mathcal{H}_0|\bm{a} )-\mathbb{P}(\mathcal{H}=\mathcal{H}_1|\bm{a} ) \gtrsim \hat{\lambda}.
\end{equation}
By using (\ref{testfunNN}) on the output of the \ac{nn} designed with the \ac{ce} criterion, under the convergence hypothesis, (\ref{lasteq}) coincides (except for a different threshold value) with (\ref{eq:bayesDisc}), the function implemented by the \ac{nn} trained with the \ac{mse} criterion. Therefore, from Corollary~1 we conclude that also the \ac{ce} design criterion provides a test function equivalent the \ac{np} test function.
\end{proof}

\subsection{Support Vector Machines}\label{sec:svm}

A \ac{svm} \underline{\cite[Chapter~7]{Bishop2006}} is a supervised learning model that can be used for classification and regression. We focus here on binary classification to solve the \ac{irlv} problem. The \ac{svm} implements the $\tilde{t}(\bm{a}): \mathbb{R}^{N_{\rm AP}} \to \mathbb{R}$  function 
\begin{equation}
\label{eq:svm}
\tilde{t}(\bm{a}) = \bm{w}^T \phi (\bm{a}) + b,
\end{equation}
where $\phi: \mathbb{R}^{N_{\rm AP}} \to \mathbb{R}^K$ is a feature-space transformation function, $\bm{w} \in \mathbb{R}^K$ is the weight column vector and $b$ is a bias parameter. The test function is again provided by (\ref{testfunNN}), where now $\tilde{t}(\bm{a})$ is given by (\ref{eq:svm}). Note that in the conventional \ac{svm} formulation, we have $\lambda = 0$, while here $\lambda$ is chosen according to the desired \ac{fa} probability. \revi{hyper2}{While the feature-space transformation function is chosen before training \mbox{\cite[Chapter~7]{Bishop2006}}, vector $\bm{w}$ must be properly learned from the data to obtain the desired hypothesis testing.}

We consider the \ac{lssvm} approach \cite{Suykens1999} for the optimization of the \ac{svm} parameters.  Learning for \ac{lssvm} is performed by solving the following optimization problem
\begin{subequations}
	\label{eq:lssvm}
	\begin{equation}
	\label{eq:lssvmOrig}
	\underset{\bm{w},b }{\text{min}} \quad \omega(\bm{w},b) \triangleq \frac{1}{2} \bm{w}^T \bm{w} + C \frac{1}{2} \sum_{i=1}^S e_i ^2 
	\end{equation}
	\begin{equation}
	\label{eq:stpart}
	e_i =   t_i[\bm{w}^T \phi (\bm{a}^{(i)}) + b]-1   \quad i = 1 ,\dots,S\,,
	\end{equation}
\end{subequations}
\revi{hyper3}{where $C$ is not optimized by the learning algorithm, but must be tuned on training data using a separate procedure, e.g., see \cite{guo2008novel}.}
In conventional \ac{svm}, variables $e_i$, $i=1,\ldots,S$, are constrained to be non-negative and appear in the objective function without squaring. Inequalities in the constraints translate into a quadratic programming problem, while equalities constraints in \ac{lssvm} yield a linear system of equations in the optimization values. In \cite{Yevs}, it is shown that \ac{svm} and \ac{lssvm} are equivalent under mild conditions. From  constraints  \eqref{eq:lssvm} and the fact that $t_i = \pm 1$ we have
\begin{equation}
\label{eq:els}
e_i^2 = (1 - t_i\tilde{t}(\bm{a}^{(i)}) )^2 = (t_i - \tilde{t}(\bm{a}^{(i)}))^2,
\end{equation}
that is the squared error between the soft output of the \ac{lssvm} $\tilde{t}(\bm{a}^{(i)})$ and the correct training label $t_i$.

We now prove the equivalence between the \ac{lssvm} and \ac{np} classifiers. Let us first consider the following lemma that establishes the convergence of the learning phase of \ac{svm}, as $S\rightarrow \infty$.

\begin{lemma}
	\label{lem:lem1}
	For a large number of training samples $\bm{a}^{(i)}$ taken with a given static probability distribution from a finite alphabet $\mathcal C$, i.e., for $S \rightarrow \infty$, the vector $\bm{w}$ of the \ac{lssvm} converges in probability to a vector of finite norm $||\bm{w}||_2 = \bm{w}^T\bm{w}$.
\end{lemma}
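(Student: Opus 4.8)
I would exploit that, for the \ac{lssvm}, the equality constraints turn the learning problem into a linear system with a closed-form solution, and then let $S\to\infty$ through the law of large numbers; finiteness of the alphabet $\mathcal C$ is what makes every empirical average involved bounded. Concretely, using $t_i\in\{-1,+1\}$ together with (\ref{eq:els}), write $e_i^2=(\bm u^T\bm\varphi_i-t_i)^2$ where $\bm u=[\bm w^T,b]^T$ and $\bm\varphi_i=[\phi(\bm a^{(i)})^T,1]^T$, so that (\ref{eq:lssvm}) becomes the unconstrained strictly convex problem of minimizing $\tfrac12\bm u^T\bm E\bm u+\tfrac C2\sum_{i=1}^S(\bm u^T\bm\varphi_i-t_i)^2$ with $\bm E=\mathrm{diag}(1,\dots,1,0)$. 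Zeroing the gradient gives the normal equations $(\epsilon_S\bm E+\widehat{\bm A}_S)\bm u_S=\widehat{\bm b}_S$, with $\widehat{\bm A}_S=\tfrac1S\sum_i\bm\varphi_i\bm\varphi_i^T$, $\widehat{\bm b}_S=\tfrac1S\sum_i t_i\bm\varphi_i$ and $\epsilon_S=1/(CS)$; since $\epsilon_S\bm E+\widehat{\bm A}_S\succ0$ for every $S\ge1$, the solution $\bm u_S=(\epsilon_S\bm E+\widehat{\bm A}_S)^{-1}\widehat{\bm b}_S$ is well defined. It is a Tikhonov-regularized least-squares estimate whose regularization weight $\epsilon_S$ vanishes as $S\to\infty$.

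Next, since $\bm a^{(i)}\in\mathcal C$ and $\mathcal C$ is finite, the entries of $\bm\varphi_i\bm\varphi_i^T$ and $t_i\bm\varphi_i$ take finitely many values and are bounded, so by the (weak) law of large numbers $\widehat{\bm A}_S\to\bm A:=\mathbb E[\bm\varphi\bm\varphi^T]$ and $\widehat{\bm b}_S\to\bm b:=\mathbb E[t\bm\varphi]$ in probability, while $\epsilon_S\to0$ deterministically. If $\bm A$ is nonsingular this already ends the proof: by continuity of matrix inversion at $\bm A$, $\bm u_S\to\bm A^{-1}\bm b$ in probability, a vector of finite norm, and $\bm w_S$ is its first block. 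For the general case I would argue that the population \ac{mse} risk $R(\bm u)=\mathbb E[(\bm u^T\bm\varphi-t)^2]\ge0$ is a convex quadratic bounded below, hence attains its minimum, so $\bm A\bm u=\bm b$ is consistent, i.e.\ $\bm b\in\mathrm{range}(\bm A)$; similarly $\widehat{\bm b}_S\in\mathrm{range}(\widehat{\bm A}_S)$ for every $S$. Moreover, with probability tending to one the training set eventually contains every positive-probability symbol of $\mathcal C$, so $\mathrm{range}(\widehat{\bm A}_S)=\mathrm{range}(\bm A)=:\mathcal R$ stabilizes to a fixed subspace on which $\bm A$, and eventually $\widehat{\bm A}_S$, is uniformly positive definite, while $\epsilon_S\bm E$ is precisely what controls the otherwise free component of $\bm u_S$ along $\mathcal R^\perp$. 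A standard Tikhonov-limit argument then yields $\bm u_S\to\bm u^\star$ in probability, the minimum-$\bm E$-seminorm solution of $\bm A\bm u=\bm b$, which has finite norm; hence $\|\bm w_S\|_2$ converges in probability to $\|\bm w^\star\|_2<\infty$. (Alternatively, one can recast the argument as epi-convergence of the rescaled convex objectives $\tfrac1S\omega(\bm w,b)$ and invoke the standard convergence of their minimizers.)

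The step I expect to be the main obstacle is passing to the limit in the degenerate case where $\bm A$ is singular, which does occur for rich feature maps $\phi$ (for instance when $K$ exceeds the number of distinct feature vectors induced by $\mathcal C$, or in the kernel setting). There the regularized inverse $(\widehat{\bm A}_S+\epsilon_S\bm E)^{-1}$ by itself diverges along $\ker\bm A$ as $\epsilon_S\to0$, so one cannot pass to the limit factor by factor; the argument must combine the fact that $\widehat{\bm b}_S$ always lies in $\mathrm{range}(\widehat{\bm A}_S)$, which stabilizes to $\mathcal R$, with the nondegeneracy of $\bm E$ on the complementary directions, to keep $\bm u_S$ bounded and convergent. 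If one is content to assume $\bm A\succ0$ --- the alphabet being in ``general position'' for the chosen feature map --- this difficulty disappears and the last step collapses to the one-line continuity argument above.
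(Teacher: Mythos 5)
Your proposal is correct in outline and reaches the right conclusion, but it takes a genuinely different route from the paper's. You work in the primal: you absorb the equality constraints into an unconstrained ridge-type objective in the augmented variable $\bm{u}=[\bm{w}^T,b]^T$, derive the normal equations $(\epsilon_S\bm{E}+\widehat{\bm{A}}_S)\bm{u}_S=\widehat{\bm{b}}_S$ with vanishing regularization $\epsilon_S=1/(CS)$, and pass to the limit via the law of large numbers plus a Tikhonov-limit argument. The paper instead works through the Lagrangian dual: it uses the finiteness of $\mathcal C$ (via Glivenko--Cantelli) to collapse the $S$ constraints into $2M$ distinct ones weighted by the empirical frequencies $Sp_{\bm{a}^{(i)},t_i}(\bm{\alpha}_j,t)$, writes the KKT system in the $2M+1$ unknowns $\{u_k,v_k,b\}$, asserts that this finite linear system has a finite solution, and reads off $\bm{w}^T\bm{w}=\sum_{k,h}k(\bm{\alpha}_k,\bm{\alpha}_h)(v_kv_h+u_ku_h-2v_ku_h)$ as a finite sum. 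Your version is arguably more careful about degeneracy: the paper's step ``a $(2M{+}1)\times(2M{+}1)$ linear system therefore has a finite solution'' silently assumes nonsingularity, which is precisely the singular-$\bm{A}$ case you isolate and sketch how to resolve (your observation that $\bm{E}$ is positive definite on $\ker\bm{A}$, because any null direction of $\bm{A}$ with zero $\bm{w}$-part forces $v_b=0$, is the key fact that makes the Tikhonov limit go through). What the paper's kernelized dual buys is that it never requires the feature dimension $K$ to be finite, which matters since the numerical section uses a Gaussian kernel; your primal normal-equations argument implicitly takes $K<\infty$ as in the paper's definition of $\phi$. That gap is easily closed --- by finiteness of $\mathcal C$ everything lives in the span of $\{\phi(\bm{\alpha}_1),\dots,\phi(\bm{\alpha}_M)\}$, so you can rerun your argument in that finite-dimensional subspace --- but it is worth flagging.
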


\begin{proof}
See the Appendix \ref{sec:proofTh3}.
\end{proof}
 
We can now prove the following theorem establishing the optimality of the \ac{svm} solution, as it provides the most powerful \ac{np} test for a given \ac{fa} probability.
\begin{theorem}
	\label{th:lsnp}
	\revi{revGO}{Consider a \ac{lssvm} with training converged to the global minimum of $\omega(\bm{w},b)$, and using an infinite number of training points $\bm{a}^{(i)}$ drawn from the finite alphabet $\mathcal C$.} Then the test function (\ref{testfunNN}) with (\ref{eq:svm}) provides the \ac{np} test, thus it is the most powerful test.
\end{theorem}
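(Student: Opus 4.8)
The plan is to reduce this claim to the curve-fitting (\ac{mse}) result already established in Corollary~\ref{th:nn_np}. The starting observation is the identity (\ref{eq:els}), which lets us rewrite the \ac{lssvm} objective as $\omega(\bm{w},b) = \tfrac{1}{2}\bm{w}^T\bm{w} + \tfrac{C}{2}\sum_{i=1}^S (t_i - \tilde{t}(\bm{a}^{(i)}))^2$: apart from the regularization term $\tfrac{1}{2}\bm{w}^T\bm{w}$, the \ac{lssvm} minimizes exactly the empirical squared error between the soft output $\tilde{t}(\bm{a})$ and the labels. So the idea is to show that, as $S\to\infty$, the regularizer becomes asymptotically negligible, so that the converged \ac{lssvm} function coincides with the minimizer of the \emph{expected} squared error; that minimizer is (affinely related to) the Bayes optimal discriminant (\ref{eq:bayesDisc}), and thresholding it yields the \ac{np} test exactly as in the proof of Corollary~\ref{th:nn_np}.

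Concretely, I would first normalize the objective by $S$ and invoke Lemma~\ref{lem:lem1}: since $\bm{w}$ converges in probability to a vector of finite norm, $\tfrac{1}{2S}\bm{w}^T\bm{w}\to 0$, whereas by the law of large numbers applied to the i.i.d.\ draws from the finite alphabet $\mathcal C$ one has $\tfrac{1}{S}\sum_{i=1}^S (t_i - \tilde{t}(\bm{a}^{(i)}))^2 \to \Exp{(t - \tilde{t}(\bm{a}))^2}$. Hence the rescaled objective converges to $\tfrac{C}{2}\Exp{(t - \tilde{t}(\bm{a}))^2}$, so the \ac{lssvm} at convergence is a minimizer of the expected squared error over the functions representable as $\tilde{t}(\bm{a}) = \bm{w}^T\phi(\bm{a}) + b$. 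Second, because $\mathcal C$ is finite, the feature map $\phi$ can be taken rich enough (for instance $K=|\mathcal C|$ with a one-hot embedding of $\mathcal C$) that every real-valued function on $\mathcal C$ is representable — this is the \ac{svm} analogue of the ``infinite number of neurons'' hypothesis used for the \ac{nn} results — so the unconstrained \ac{mmse} estimator is attained, giving $\tilde{t}(\bm{a}) = \Exp{t\,|\,\bm{a}} = \mathbb{P}(\mathcal H=\mathcal H_1|\bm{a}) - \mathbb{P}(\mathcal H=\mathcal H_0|\bm{a})$, which is the Bayes discriminant (\ref{eq:bayesDisc}) up to sign. Third, applying the threshold $\lambda$ of (\ref{testfunNN}) to this $\tilde{t}(\bm{a})$ and rearranging via the Bayes rule exactly as in the proof of Corollary~\ref{th:nn_np} (the sign only flips the direction of the inequality and rescales the effective threshold) shows that $f(\bm{a})$ compares $p(\bm{a}|\mathcal H_0)/p(\bm{a}|\mathcal H_1)$ with a constant, i.e.\ it is the \ac{np} test; picking $\lambda$ to fix $P_{\rm FA}$ makes it most powerful at that significance level.

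The hard part will be the step that discards the regularization term: one must be sure that the $\tfrac{1}{2}\bm{w}^T\bm{w}$ penalty does not bias the limiting solution away from the \ac{mmse} estimator. This is exactly what Lemma~\ref{lem:lem1} is for — a bound on $\|\bm{w}\|_2$ that does not grow with $S$ guarantees $\tfrac{1}{2S}\bm{w}^T\bm{w}\to 0$, so the penalty is dominated by the $\Theta(S)$ data-fidelity term and cannot affect the argmin in the limit. A secondary care point is that the \ac{lssvm} optimizes over the restricted class $\{\bm{w}^T\phi(\cdot)+b\}$ rather than over all functions; the finiteness of $\mathcal C$ closes this gap, and it is worth flagging explicitly that the theorem's conclusion relies on $\phi$ being chosen with this expressiveness in mind.
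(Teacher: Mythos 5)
Your proposal follows essentially the same route as the paper's proof: normalize $\omega(\bm{w},b)$ by $S$, use Lemma~\ref{lem:lem1} to make the regularizer vanish, apply the law of large numbers to turn the data term into the expected squared error via (\ref{eq:els}), and then reduce to the Bayes-discriminant/Corollary~\ref{th:nn_np} argument. The only difference is cosmetic and to your credit: you make explicit the representability assumption on $\phi$ over the finite alphabet $\mathcal C$ (and the sign convention on $\Exp{t\,|\,\bm{a}}$), both of which the paper leaves implicit by simply citing the NN result of \cite{Ruck-90}.
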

\begin{proof}
	From \eqref{eq:lssvmOrig} consider
	\begin{equation}
	\label{eq:lssvmDim1}
	\lim_{S \to +\infty} \frac{1}{S} \omega(\bm{w},b) =\frac{C}{2} \lim_{S \to +\infty}\frac{1}{S}  \sum_{i=1}^S e^2_i	=\frac{C}{2}\E_t(\bm{w},b),
	\end{equation}
	where $\E_t(\bm{w},b) = \Exp{e_i^2} $ is the expected value computed \wrt the training points $\bm{a}^{(i)}$, as $S$ goes to infinity. 	The first equality in \eqref{eq:lssvmDim1} comes from Lemma 1: since $\bm{w}$ converges to a finite norm, we can write
$
	\lim_{S\to \infty} \frac{1}{S} \bm{w}^T \bm{w} 	= 0.
$
	The last equality comes from the strong law of large numbers. In the limit, the optimization problem \eqref{eq:lssvm} is equivalent to
	\begin{equation}
	\label{eq:lsInf}
	\begin{aligned}
	& \underset{\bm{w},b}{\text{min}} & &  \E_t(\bm{w},b), & 
	\end{aligned}	
	\end{equation}
	where we dropped constraints  \eqref{eq:lssvm} by using \eqref{eq:els}. The optimization problem is the same as of \ac{nn} design and from \cite{Ruck-90}, with the pair $(\bm{w}^*,b^*)$ minimizing \eqref{eq:lsInf} and parametrizing \eqref{eq:svm}, we have
$
	\tilde{t}(\bm{a}^{(i)})  \approx \mathbb{P}(\mathcal{H}_0|\bm{a}^{(i)}) - \mathbb{P}(\mathcal{H}_1|\bm{a}^{(i)}).
$
	Lastly, we exploit Corollary \ref{th:nn_np} to conclude the \ac{np}-optimality of \ac{ls}-\ac{svm}.
\end{proof}

In summary, we have proven that both \ac{nn} (with \ac{ce} and \ac{mse} design) and \ac{svm} (with \ac{ls} design) converge to the \ac{np} test function as the training set size $S$ goes to infinity, thus establishing their asymptotic optimality and their relation to the theory of most powerful hypothesis testing.

\subsection{Computational Costs of \ac{ml} Approaches}
\label{sec:comp}
\revi{comp1}{In this section, we briefly review the computational cost for i) training each machine and ii) making a prediction on a new data point. Let $\eta$ be the number of epochs (how many times each training point is used) of a \ac{nn}.}

\revi{comp2}{For a basic fully connected feed-forward \ac{nn}, the backpropagation training algorithm is $\mathcal O(\eta \cdot S \cdot N_L \cdot N_{\rm AP}^3)$ when the number of neurons of each hidden layer is proportional to the input size, while the prediction of a new unseen data point is $\mathcal O(N_L \cdot N_{\rm AP}^3)$. For a more detailed analysis, which takes into account also the cost of the choice of the  activation function, see {\cite{Bianchini2014}}.}

\revi{comp3}{For a \ac{lssvm}, the estimate of the vector $\bm{w}$ at training time is found by solving a linear set of equations (instead of the traditional quadratic programming of \ac{svm}). In general, the computational cost is $\mathcal O(S^3)$; however, there are more efficient solutions that reduce this complexity to $\mathcal O(S^2)$ (see \cite{vanGestel2004}). At test time, the prediction is linear in the number of features and the number of training points, \ie $\mathcal O(N_{\rm AP} \cdot S)$.} %Indeed, more efficient online training algorithms are available {\cite{BordesEtAl2005}}.}

%It is common practice in the literature \cite{Bishop2006,Suykens1999} to work with the dual formulation of the optimization problems \eqref{eq:svmS} to \eqref{eq:lssvm} by constructing the Lagrangian. 
%In the dual formulation objective functions and constraints are expressed as functions of the kernel 
%\begin{equation}
%	\psi(\bm{r}_i,\bm{r}_j) = \phi (\bm{r}_i)^T \phi(\bm{r}_j),	
%\end{equation}
%without explicitly defining the function $\phi(\cdot)$. 
%The output of $\phi(\cdot)$ can now be of infinite dimension, like with the radial kernel family
%\begin{equation}
%	\psi(\bm{r}_i,\bm{r}_j; \sigma) = \exp \left( \frac{|| \bm{r}_i - \bm{r}_j ||^2}{2\sigma^2} \right).
%\end{equation}
%In this case, Theorem \ref{th:lsnp} holds only if
%\begin{equation}
%	\lim_{S \to +\infty} \frac{1}{S} \bm{w}^T \bm{w} < +\infty.	
%\end{equation} 
%However, if we let $C \to +\infty$ in \eqref{eq:lssvm}

\section{\ac{irlv} By One-class Classification}
\label{sec:OneClass}

\revi{revOC}{In practice, collecting training points from region ${\mathcal A}_1$ may be difficult, since this region may be large and not necessarily well defined (being simply the complement of $\mathcal A_0$).} \revi{oneClass}{Therefore, during the training phase, we collect attenuation vectors only from inside $\mathcal{A}_0$ and use them to train a \ac{ml} classifier to distinguish between vectors belonging to $\mathcal{A}_0$ and $\mathcal{A}_1$ in the testing phase.} This problem can also be denoted as one-class classification, since we have only samples taken from one of the two classes of the problem to train the models. 

In the following, we address the problem of  one-class classification  implemented via both \ac{nn} and \ac{svm}. Two approaches are considered: the \ac{ae}, using a \ac{nn}, and the \ac{oclssvm}.

Before proceeding, we consider the optimal approach when only the channel statistics from within $\mathcal A_0$ are known a-priori. In this case the \ac{llr} \eqref{eq:lr} can not be used as discriminant function, as $p(\bm{a}|\mathcal{H}_1)$ is not known. We can instead resort to the \ac{glrt} \cite{Kay-book}, which, although in general sub-optimal, is a meaningful generalization of the \ac{np} test, providing the test function 
\begin{equation}\label{eq:GLRT}
f^*(\bm{a}) =
\begin{cases}
-1 &\text{if } p(\bm{a}|\mathcal{H}_0) \geq \Lambda \\
1 & \text{if } p(\bm{a}|\mathcal{H}_0) < \Lambda.
\end{cases}
\end{equation} 

\subsection{Auto Encoder \ac{nn}}
\label{sec:auto}

\revi{deep ae}{We consider the \ac{ae} \protect{\cite{Hinton-2006}}, i.e., a \ac{nn} trained to copy the input to the output. It comprises an encoder NN (with $N_e$ layers), which transforms the $N$-dimensional input data into the $M$-dimensional code, with $M<N$, and a decoder NN (with $N_d$ layers), which reconstructs the original high-dimensional data from the low-dimensional code. For an in-depth description of the \ac{ae} architecture, please refer to \protect{\cite[Chapter 14]{goodfellow}}. 
%An example of \ac{ae} architecture is shown in Fig. \ref{fig:aeArch}. Note that not all neurons are connected to each other (i.e., we do not have a fully connected \ac{nn}), in order to reduce complexity. Indeed, it has been shown that with many locally correlated inputs, it is useful to provide only local connections on parts of the input for a faster and more accurate convergence.
When implementing \acp{ae}, it is convenient to use linear activation functions at the last hidden layer of the decoder \protect{\cite[Chapter 14]{goodfellow}}.} Note that the \ac{ae} output is a vector of the same size of the \ac{ae} input, and one-class classification is obtained by computing the reconstruction error between the input and the output of the \ac{ae} and comparing its absolute value with a chosen threshold.

% \begin{figure}[t]
%     \centering
%     \includegraphics[width=0.5\columnwidth]{AE.jpg}
%     \caption{Example of \ac{ae} architecture with $5$ input (and output) values and 3 hidden layers.} 
%     \label{fig:aeArch}
% \end{figure}

For our \ac{irlv} problem, we train the \ac{ae} with attenuation vectors $\bm{a}^{(i)}$ taken only when the trusted \ac{ue} is in  \ac{roi} $\mathcal A_0$. Then, by letting $\bm{y}^{(Q)}(\bm a)$ be the output vector of the \ac{ae} for the attenuation input $\bm{a}$, the \ac{mse} is 
\begin{equation}\label{eq: rec err}
    \Gamma^{(AE)} = \frac{1}{N}\sum_{n=1}^{N}|a_n-y^{(Q)}_n(\bm{a})|^2.
\end{equation}
Finally, the \ac{irlv} test function  is  
\begin{equation}
f(\bm{a}) =
\begin{cases}
1 &\text{if } \Gamma^{(AE)} \geq \lambda^{(\rm AE)}, \\
-1 & \text{if } \Gamma^{(AE)} < \lambda^{(\rm AE)},
\end{cases}
\end{equation}
where again $\lambda^{(\rm AE)}$ must be chosen to achieve a desired \ac{fa} probability.

\revi{mseThresholding}{As the \ac{ae} attempts to copy the input to its output, in the testing phase only vectors with features similar to those of the training set will be reconstructed with smaller \ac{mse}, whereas input vectors with different features will be mapped to different vectors at the output, with large \ac{mse}. Since training is based on vectors collected from area $\mathcal{A}_0$ and we want to verify users located inside $\mathcal{A}_0$, by thresholding the \ac{mse}, we can obtain the desired classifier.}

About the test power of the \ac{ae}, we observe that it can be seen as the quantizer (or compression process) of an $N$-dimensional signal into an $M$-dimensional signal. In order to minimize the \ac{mse} of the reconstruction error, inputs with higher probability will have smaller quantization regions. Moreover, as the number of quantization points goes to infinity (since the quantization indices are in the continuous $M$-dimensional space) all points in the same quantization region will have approximately the same probability. However, the quantization error for points within each region will be different for each point; in particular, equal to zero for the quantization point and greater than zero at the edges of the quantization region. Thus, we can conclude that the \ac{ae} can not provide as output the \ac{pdf} of the input, even with infinite training and an infinite number of neurons, as required by the \ac{glrt} decision rule (\ref{eq:GLRT}). On the other hand, input points with a smaller \ac{pdf} belong to larger quantization regions for which the reconstruction error is {\em on average} larger; therefore, the output provided by the \ac{ae} is on average monotonically decreasing with the \ac{pdf} of the input point.

\subsection{One-Class LS-SVM}

We can also resort to \ac{svm} to perform the one-class classification in \ac{irlv}: we focus in particular on the  \ac{oclssvm}, first introduced in \cite{choi2009} as an extension of the one-class \ac{svm} \cite{Scholkopf2001estimating}. 
The only difference with respect to the \ac{svm} introduced in Section~III is that the training optimization problem is now
\begin{subequations}
	\label{eq:oneClassSvm}
	\begin{equation}
	\label{eq:oneClass1}
	\underset{\bm{w},b}{\min} \quad \omega(\bm{w}, b) \triangleq
	 \frac{1}{2} \bm{w}^T \bm{w} +  \frac{C}{2} \sum_{i=1}^S e_i^2 +b
	\end{equation}
	\begin{equation}
	\label{eq:oneClassConstr}
	\text{subject to}\, -b - \bm{w}^T \phi (\bm{a}^{(i)})  = e_i,  \quad i = 1,\dots S.
	\end{equation}
\end{subequations}
Note that in the one-class case, the bias parameter $b$ appears also in the objective function.

We observe that also for \ac{oclssvm} we can not establish a correspondence with \ac{glrt}. Nevertheless, by resorting to the Chernoff bound, we can conclude that by minimizing the \ac{mse} we also minimize the upper bound of the \ac{fa} probability; therefore, the optimization process goes in the right direction although being not optimal.

\section{Numerical Results}\label{sec:numRes}

In this section, we present the performance of the proposed \ac{irlv} methods, obtained from both experimental data and the channel models of Section~II. We consider a unitary transmitting power for each user and a carrier frequency of $f_0=2.12$~GHz, and $h_{\rm AP}^{(n)} = 15$~m for all the APs, unless differently specified. When spatial correlation of shadowing is assumed, we consider a decorrelation distance of $d_c = 75$~m according to the model of Section~II. The training points for the classification tasks are taken uniformly over the area $\mathcal A$ ($\mathcal A_0$).

For the \ac{lssvm} we use a Gaussian kernel function \protect{\cite[Chapter 6]{Bishop2006}}.
\revi{activation}{For the \ac{nn} approach we use fully connected networks. For the two-class classification problem, the activation function of the input layer is the identity function, while the activation function of neurons in the  hidden and output layers is the sigmoid \protect{\cite[Section 6.2.2.2]{goodfellow}.}} \revi{numResSimplScen3}{\acp{nn} have been trained only for \ac{ce} loss function, as we have shown in Corollary 1 and Theorem 2 that with both \ac{mse} and \ac{ce} loss functions we achieve the same performance of the \ac{np} test.}

\subsection{Two-class \ac{irlv} With Single \ac{ap}}\label{sec:singleAp}

We start with a \ac{irlv} system using a single \ac{ap}. 
\paragraph*{Uncorrelated fading/shadowing} \revi{numResSimplScen}{Firstly, we consider the environment of Section \ref{sec:los} describing a small area. The channel model includes  spatially uncorrelated fading or shadowing, with $R_{\rm out}= 10$~m, $R_{\rm in} = 2$~m, and $R_{\rm min} = 0.1$~m. Moreover, \ac{los} is assumed for path-loss. For uncorrelated fading, we consider two path-loss coefficients, namely  $\nu=2$ and $3$; the closed-form expression of the \acp{llr} for the \ac{np} test are given by (\ref{eq:llr1}) and (\ref{eq:llr2}). With spatially uncorrelated shadowing, we set $\nu=2$, and three values of shadowing standard deviation, namely  $\sigma_{s, {\rm dB}} = 0.1$~dB, $1.8$~dB, and 6~dB; the closed-form expression of the \acp{llr} for the \ac{np} test is given by (\ref{eq:llr3}). For the \ac{ml} approaches, we consider $S=10^5$ training points and a \ac{nn} with $N_L = 2$ hidden layers, each layer with $N^{(i)}=5$ neurons in layer $i = 1,2$.}

\begin{figure}
    \centering
    \includegraphics[width=8.5cm]{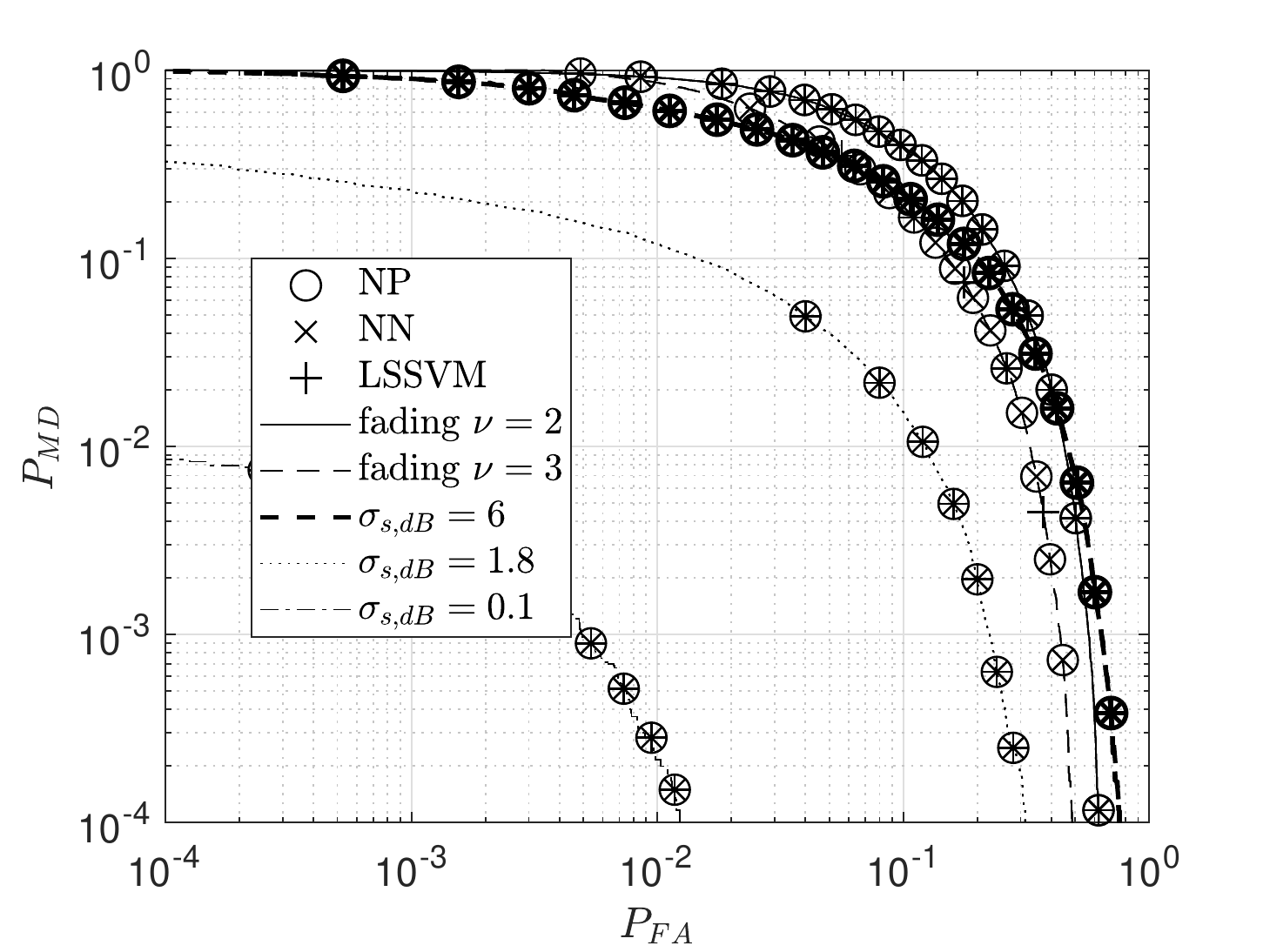}
    \caption{\acs{roc} of \acp{irlv} methods for \ac{los}, uncorrelated fading/shadowing and various values of $\nu$ and $\sigma_{s, {\rm dB}}$. Environment of Section \ref{sec:los}.}
    \label{fig:ceVSnp}
\end{figure}

\revi{numResSimplScen2}{Fig. \ref{fig:ceVSnp} shows the \ac{fa} probability versus the \ac{md} probability i.e., the  \acf{roc}, obtained with the \ac{np} test,  the \ac{nn}, and \ac{lssvm} classifiers. We notice that all models achieve the same performance, confirming our theoretical results that both \ac{nn} and \ac{lssvm} with sufficient training data and number of hidden layers are optimal as \ac{np}.} \revi{numResSimplScen4}{We observe that fading has more impact on the performance than shadowing, yielding higher \ac{fa} and \ac{md} probabilities. Still, with fading, a higher path-loss coefficient provides better results, since the attenuation increases more with the distance, thus easing classification. For spatially uncorrelated shadowing, performance improves as $\sigma_{s, {\rm dB}}$ decreases, since in this case path-loss alone already provides error-free decisions, thus the shadowing component is a disturbance in the decision process.}

\paragraph*{Spatially correlated shadowing} 

\begin{figure}[t]
    \centering
    \includegraphics[width=8.5cm]{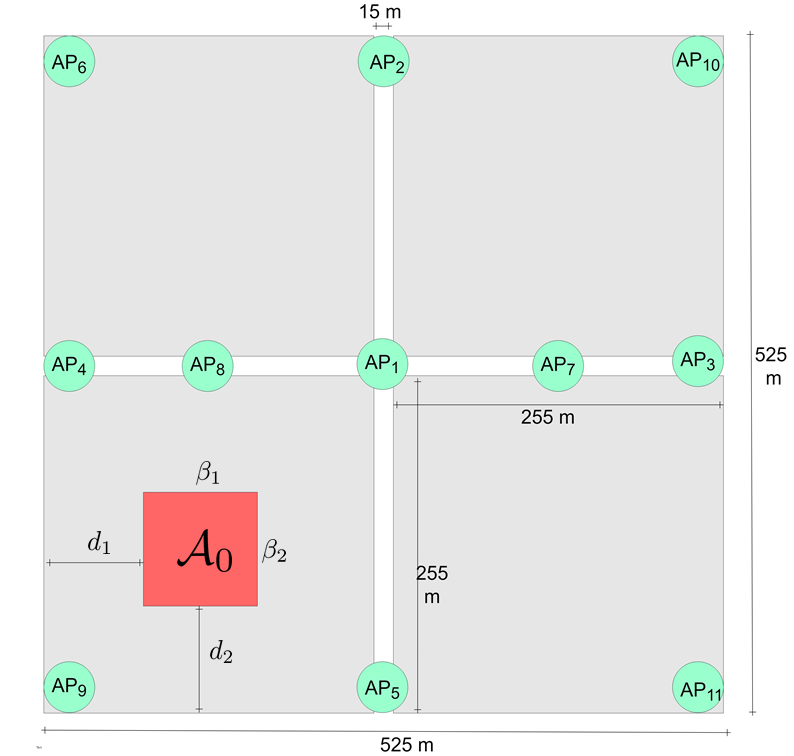}
    \caption{Reference environment.} 
    \label{fig:mBS}
\end{figure}

\begin{figure}[t]
    \centering
    \includegraphics[width=8.5cm]{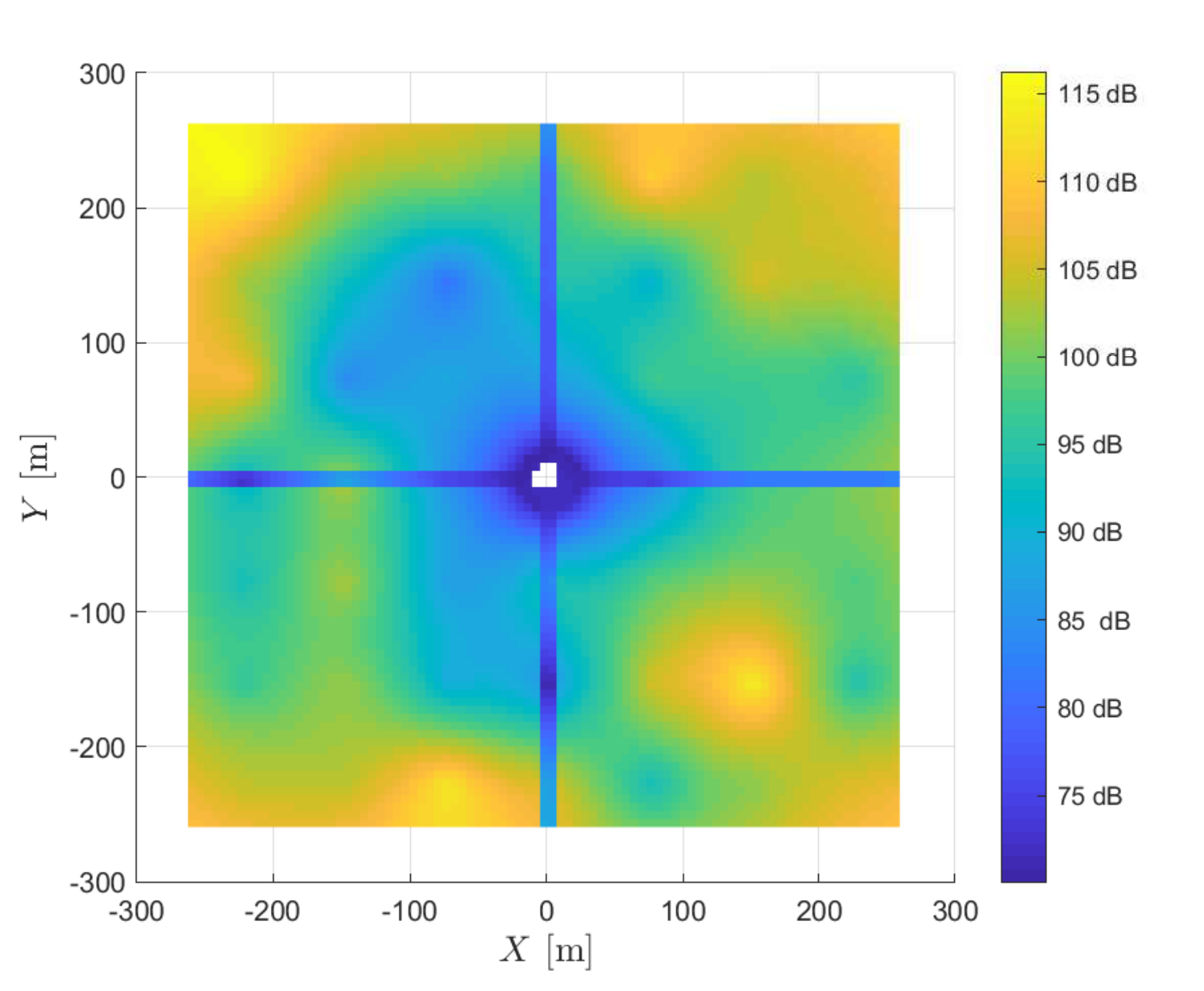}
    \caption{Example of attenuation map including path-loss and shadowing, with the \ac{ap} positioned at the center.}
    \label{fig:map}
\end{figure}

We now consider the spatially correlated shadowing  ($\sigma_{s, {\rm dB}} = 8$~dB) of Section~II. The simulation environment is shown in Fig. \ref{fig:mBS}, using only $\rm AP_1$  at the street intersection (while all other \acp{ap} are not used) and a square \ac{roi} with $d_1= 50$~m, $d_2= 50$~m, and $\beta_1 = \beta_2 = 150$~m. \revi{building}{The \ac{roi} is inside the  south-west building, modelling for example a scenario wherein privileged network resources are accessible only to users inside an office.} Along the streets, \ac{los} propagation conditions hold (with $\nu = 2$), while non-\ac{los} propagation conditions hold in the rest of the area. 
Fig. \ref{fig:map} shows a realization of the  attenuation map (including both path-loss and shadowing), highlighting the different propagation conditions. 
Since no closed-form expression of the \ac{llr} is available in this scenario, we quantize the attenuations collected in the learning phase  with a large alphabet and estimate the sampled \ac{pdf} for the quantized attenuations. Lastly, we use the estimated \ac{pdf} to compute the \acp{llr}. We use $4.46 \cdot 10^6$ training points in the area $\mathcal A$ and a uniform quantizer for the attenuation (within the observed extreme values) with $300$ quantization values. Only $10^3$ points are used for training both the \ac{mlp} and \ac{svm}.

\begin{figure}[t]
    \centering
    \includegraphics[width=8.5cm]{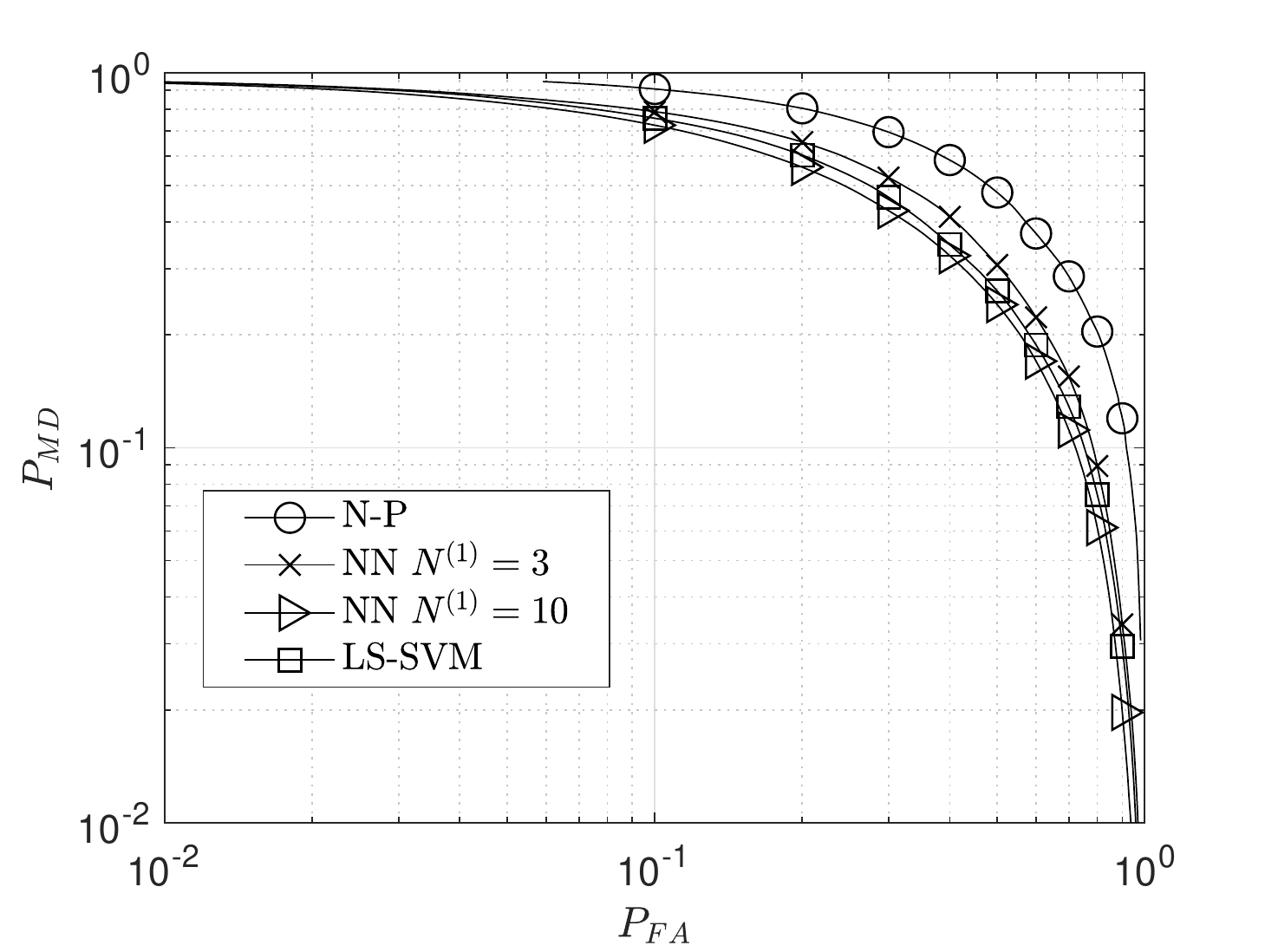}
    \caption{\ac{roc} of \ac{irlv} methods, with a \ac{nn} having $N_L=1$ and two values of $N_h$. Environment of Fig. \ref{fig:mBS}, with one \ac{ap} located at the street intersection, $d_1 = 50$~m, $d_2 = 50$~m, $\beta_1 = \beta_2 = 150$ and correlated shadowing ($\sigma_{s, {\rm dB}} = 8$~dB).}
    \label{fig:trueMap}
\end{figure}

Fig. \ref{fig:trueMap} shows the \ac{roc}  of \ac{np}, \ac{nn}, and \ac{ls}-\ac{svm}, where for a given \ac{fa} probability we report the  \ac{md} probability averaged over the shadowing attenuation maps. We notice that both \ac{nn} and \ac{ls}-\ac{svm} outperform the \ac{np} test. This means that, even for a very large number of samples available to estimate the \ac{pdf}, we still have a performance degradation with respect to \ac{np} with perfect knowledge of the statistics. On the other hand, with a small amount of training points the \ac{ml} methods outperform \ac{np}, without knowing the channel model. Therefore, in the following sections we drop the \ac{np} method.

%Comparing the number of grid points and the number of training points used for the \ac{ml} algorithms we can conclude that the \ac{ml}-based solution is advantageous over the \ac{np}-based one, as it requires a smaller number of points in order to get an estimate of the area dependent probabilities. Furthermore this implies that the \ac{ml}-based \ac{irlv} system can be implemented without any a-priory knowledge of the \ac{pdf} of the hypothesis to be tested.

\subsection{Two-class \ac{irlv} With Multiple \acp{ap}}
\label{sec:res_fading}

We consider the environment of Fig. \ref{fig:mBS} with $N_{\rm AP}=10$ \acp{ap} used for \ac{irlv}, namely ${\rm AP}_i$ with $i = 2,\dots,11$. The channel model includes \ac{los} and non-\ac{los} path-loss, spatially correlated shadowing ($\sigma_{\rm s, dB} = 8$ dB), and fading, as described in Section~II. We use a \ac{nn} with $L=3$ hidden layers, each layer having $N^{(i)} = 100$ neurons, $i = 1,2,3$.  

\paragraph{No fading average} We first feed the learning machine with attenuation estimates obtained without fading average, i.e., $k_f=1$. \ac{roi} position is $d_1 = 50$~m, $d_2 = 50$~m, and $\beta_1 = \beta_2 = 150$~m. Fig. \ref{fig:kf1} shows the \ac{roc} for \ac{nn}  and \ac{ls}-\ac{svm} \ac{irlv} methods and different values of the training-set size $S$. We observe that, for a given \ac{fa} probability, the average \ac{md} probability decreases as the training-set size $S$ increases. Both \ac{ml} models have similar performance with large training sets, confirming our result that they are both asymptotically optimal. However,  \ac{svm} converges faster than \ac{nn} (i.e., with a smaller $S$) to the optimal \ac{roc}. Therefore, a careful design is needed for a practical implementation with finite training and limited computational capabilities. Note that we obtain a more accurate classification with multiple \acp{ap} rather than using a single \ac{ap}. Still, for security purposes, we would prefer even lower \ac{fa} and \ac{md} probabilities; this can be achieved, for example, by increasing the number of \acp{ap} or considering other channel features, e.g., its wideband impulse response. 

\revi{revnewarea}{We have also considered a different \ac{roi} layout, with $d_1 = 100$~m, $d_2 = 255$~m and $\beta_1 = \beta_2 = 150$~m. The \ac{roi} is still positioned in the south-west corner, but it includes both the crossroads and $\rm AP_8$ (see Fig. \ref{fig:mBS}). Channel parameters are the same of Fig. \ref{fig:kf1}.}
\revi{newarea2}{Fig. \ref{fig:kf1_newArea} shows the resulting \ac{roc}, still obtained by averaging the \ac{md} probabilities over the shadowing maps. Including the street inside the \ac{roi}, with its \ac{los} path-loss, turns out to facilitate \ac{irlv} resulting in lower \ac{fa} and \ac{md} probabilities.}

\begin{figure}[t]
    \centering
    \includegraphics[width=8.5cm]{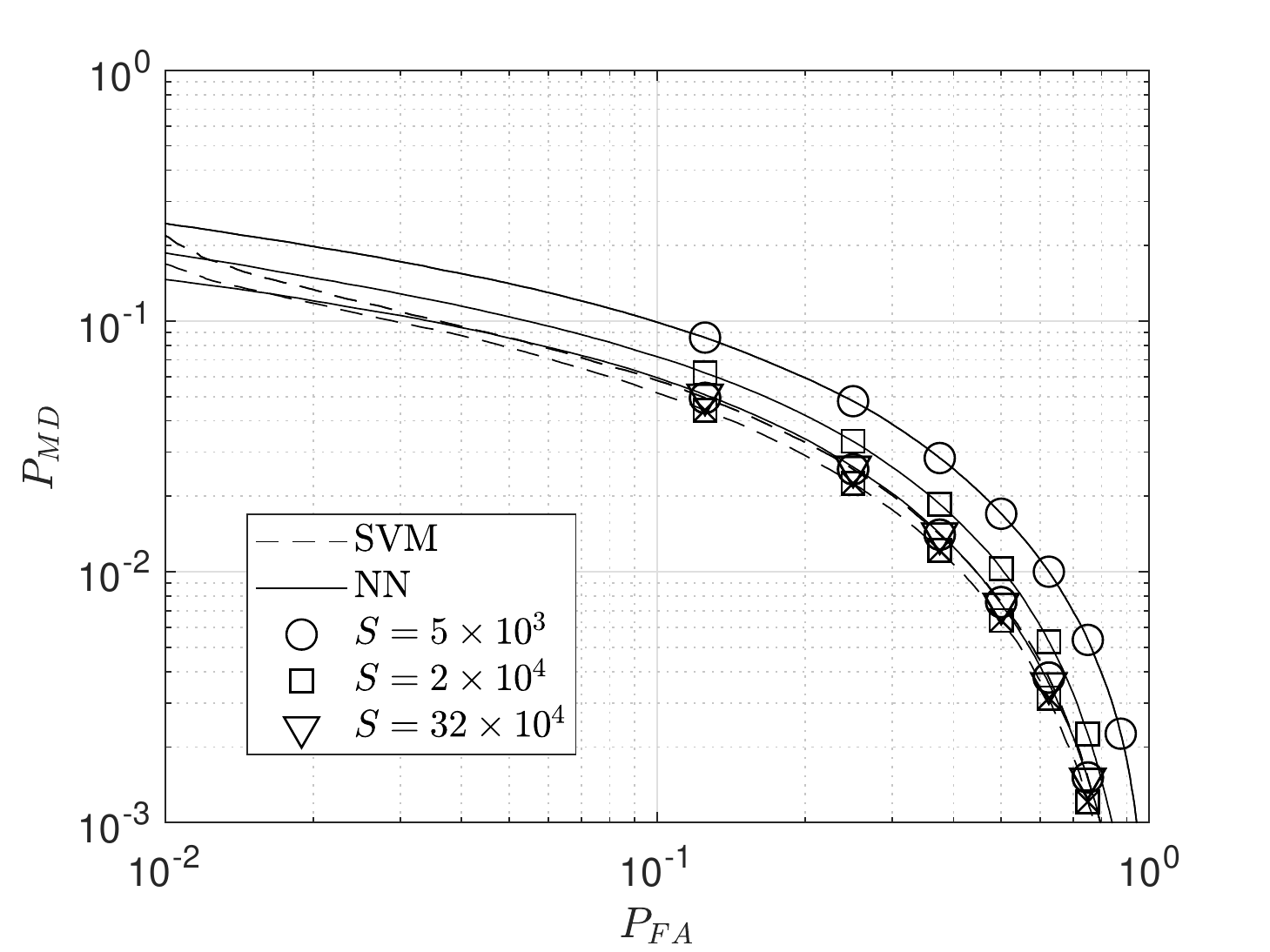}
    \caption{\ac{roc} of \ac{irlv} methods for different values of training-set size $S$. Environment of Fig. \ref{fig:mBS}, with  $N_{\rm AP}=10$, $d_1 = 50$~m, $d_2 = 50$, $\beta_1 = \beta_2 = 150$~m, and $\sigma_{s,{\rm dB}} = 8$~dB.}
    \label{fig:kf1}
\end{figure}

\begin{figure}[t]
    \centering
    \includegraphics[width=8.5cm]{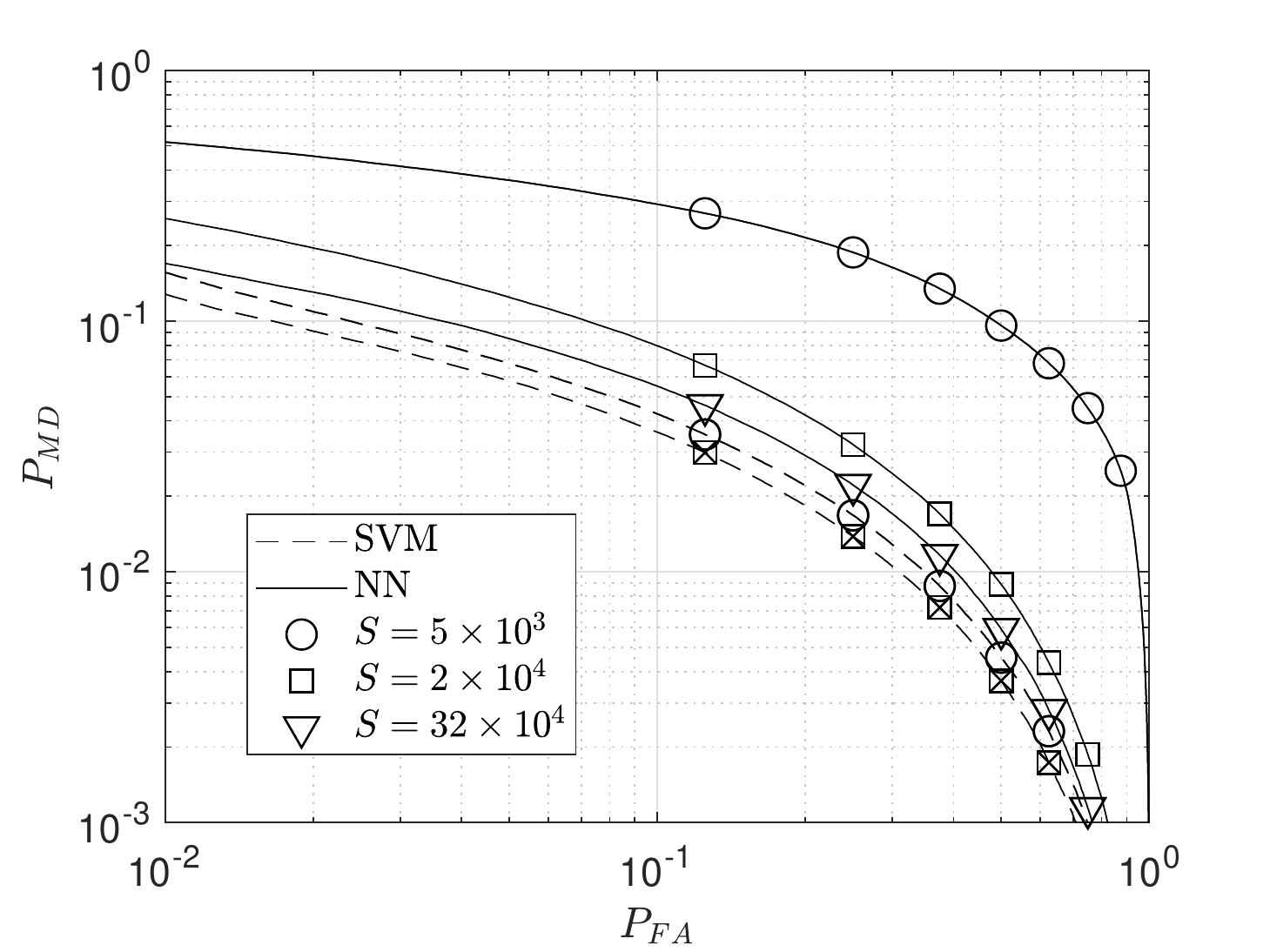}
    \caption{\ac{roc} of \ac{irlv} methods for different values of training-set size $S$. Environment of Fig. \ref{fig:mBS}, with  $N_{\rm AP}=10$, $d_1 = 100$~m, $d_2 = 225$~m, $\beta_1 = \beta_2 = 150$, and $\sigma_{s,{\rm dB}} = 8$~dB.}
    \label{fig:kf1_newArea}
\end{figure}

% \begin{figure}[t]
%     \centering
%     \includegraphics[width=8.5cm]{res_training_5BS_2Class-eps-converted-to.pdf}
%     \caption{\ac{roc} of \ac{svm} and \ac{nn} for $N_{\rm AP}=5$ and for different numbers of training set size $S$.}
%     \label{fig:kf1_5bs}
% \end{figure}

\begin{figure}[t]
    \centering
    \includegraphics[width=8.5cm]{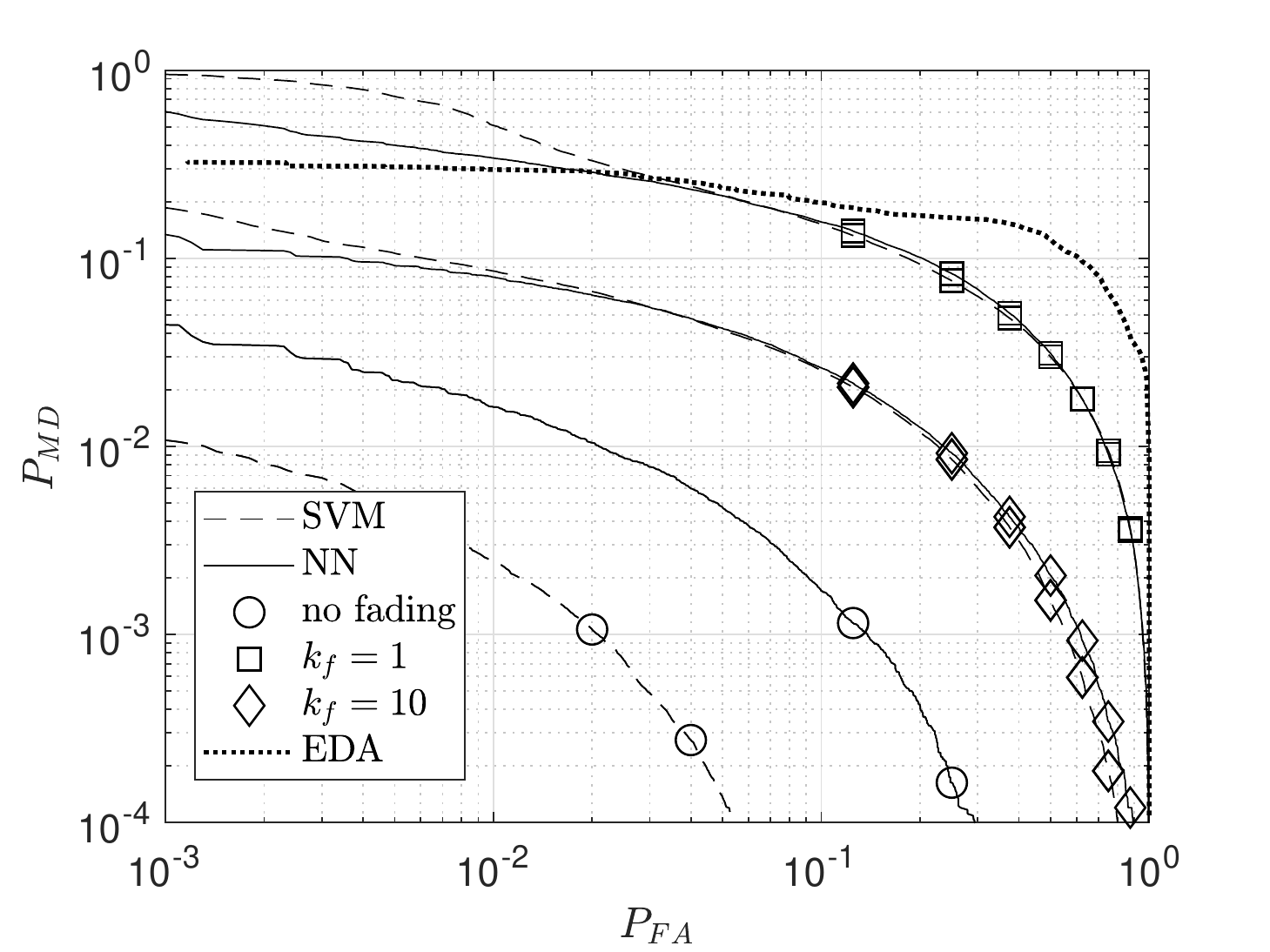}
    \caption{\ac{roc} of \ac{irlv} methods for different averages of fading. Environment of Fig. \ref{fig:mBS}, with  $N_{\rm AP}=5$, $d_1 = 50 $~m, $d_2 = 50$~m, $\beta_1 = \beta_2 = 150$~m, and $\sigma_{s,{\rm dB}} = 8$~dB.}
    \label{fig:kf10-5}
\end{figure}

% \begin{figure}[t]
%     \centering
%     \includegraphics[width=8.5cm]{res_fading_10BS_2Class-eps-converted-to.pdf}
%     \caption{\ac{roc} of \ac{irlv} methods for different averages of fading. Environment of Fig. \ref{fig:mBS}, with  $N_{\rm AP}=10$, $d_1 = 50$~m, $d_2 = 50$, and $\sigma_{s,{\rm dB}} = 8$~dB.}
%     \label{fig:kf10}
% \end{figure}

\paragraph{Effect of fading average} As discussed in Section~\ref{sec:chMod}, for a given \ac{ue} position, the attenuation changes over time due to fading. By averaging $k_f$ realizations of attenuation in the same position, the effect of fading on \ac{irlv} is mitigated.
We consider the environment of Fig. \ref{fig:mBS} with $N_{AP} =5$ \acp{ap} used for \ac{irlv}, namely ${\rm AP}_i$ with $i = 1,\dots,5$, and the values of the channel parameters are those of Fig. \ref{fig:kf1}. For $n_x$  explored locations by the \ac{ue} we obtain $S= n_x \cdot k_f$ training attenuation vectors. Fig. \ref{fig:kf10-5}  shows the \ac{roc} for $k_f=1$ and 10, with $n_x = 2 \cdot 10^4$ for \ac{svm} and $n_x = 3.2 \cdot 10^5$ for \ac{nn}.  \revi{revLI2a}{We also report the performance of \ac{eda}, assuming to know the path-loss relation between the attenuation and the distance.} \revi{rev2fad}{We note that both \ac{md} and \ac{fa} probabilities can be significantly reduced by averaging fading, thus approaching the performance on channels without fading. Indeed, an average of 10 fading realizations already reduces the average \ac{md} probability from $10^{-1}$ to $10^{-2}$, for an \ac{fa} probability of $2\cdot 10^{-1}$, while we achieve an average \ac{md} probability of $4\cdot 10^{-4}$ without fading using a \ac{nn}. We also notice that, in absence of fading, \ac{svm} significantly outperforms  \ac{nn} even if \ac{nn} uses a larger $S$. This suggests that, in this scenario, the \ac{nn} has not yet converged to the optimum, wherein potentially very good performance can be achieved, due to limits in architecture, computational capabilities, and design algorithms. We should remember, in fact, that the number of parameters defining the \ac{svm} grows with the training size, while the number of parameters of the  \ac{nn} is set a-priori.} \revi{revLI2b}{Lastly, we observe that the proposed \ac{ml} techniques (both with and without fading) outperform \ac{eda}, whose performance has been obtained on channels without fading. This is due to the fact that \ac{eda} is more severely affected by shadowing seen as disturbance in the derivation of the distance, while \ac{ml} solutions may exploit it in making the decision, while still not relying on specific channel models.} 

%\uline{Fig. \ref{fig:kf10} shows the \ac{irlv} performance in the same setting of Fig. \ref{fig:kf10-5} with $N_{\rm AP} = 10$ \acp{ap} positioned as in Fig. \ref{fig:mBS}. Comparing Fig.s \ref{fig:kf10-5} and \ref{fig:kf10} we note that by using more \acp{ap} significantly lowers the MD probability for a given FA probability when fading is present. The advantage in the absence of fading is instead very limited.}   

% \begin{table}
% \centering
% \caption{Location of the \acp{ap} in the Berlin scenario}
% \label{tab:berlin}
% \begin{tabular}{c c c}
% \toprule
% \ac{ap} index $n$ & $X_{\rm AP}^{(n)}$ [m] & $Y_{\rm AP}^{(n)}$ [m] \\
% \midrule
% 1 & 2500 & 2500 \\
% 2 & 500 & 4000 \\
% 3 & 4000 & 4000 \\
% 4 & 500 & 500 \\
% 5 & 4000 & 500 \\
% 6 & 100 & 4500 \\
% 7 & 1000 & 400 \\
% 8 & 4000 & 500 \\
% 9 & 4300 & 4000 \\
% 10 & 4500 & 500 \\
% \bottomrule
% \end{tabular}
% \end{table}

\begin{figure}[t]
    \centering
    \includegraphics[width=8.5cm]{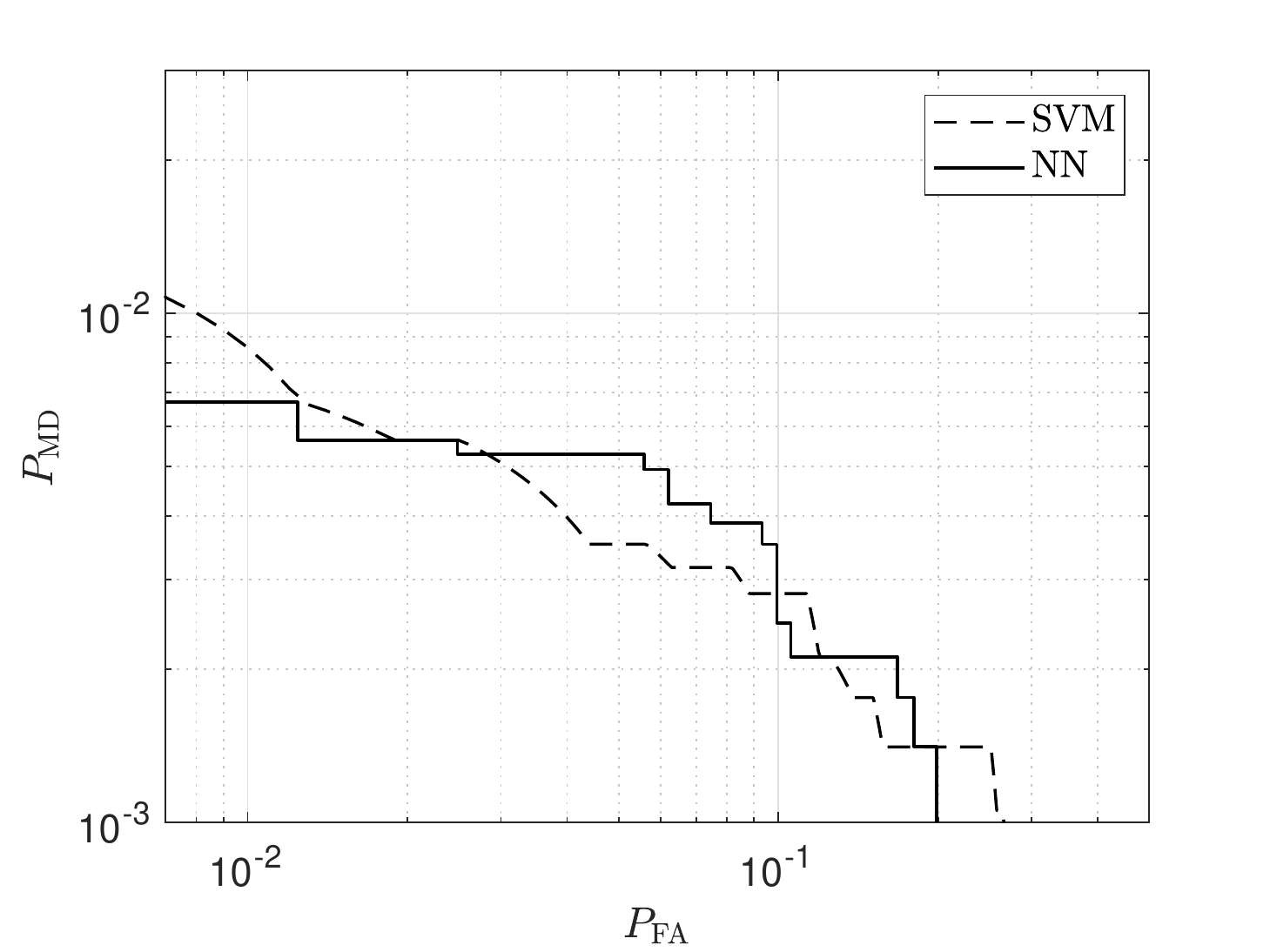}
    \caption{\ac{roc} of \ac{irlv} methods for the experimental data.}
    \label{fig:Berlinnew}
\end{figure}

\paragraph{Results on experimental data} \revi{Berlin}{We have tested the proposed \ac{irlv} solutions on real data collected by the MOMENTUM project {\cite{MOMENTUM-D53}} in a measurement campaign at Alexanderplatz in Berlin (Germany). Attenuations at the frequency of the \ac{gsm} (that may refer to a cellular IoT scenario in our \ac{irlv} context) have been measured for several \acp{ap} in an area of $4500~{\rm m} \cdot 4500~{\rm m}$, on a measurement grid of $50 \cdot 50$~m. We have considered 10 attenuation maps, corresponding to 10 \ac{ap} positions (all in meters) $\bm{x}_{\rm AP}^{(1)} = [2500, 2500]$, $\bm{x}_{\rm AP}^{(2)} = [500, 4000]$, $\bm{x}_{\rm AP}^{(3)} = [4000, 4000]$, $\bm{x}_{\rm AP}^{(4)} = [500, 500]$, $\bm{x}_{\rm AP}^{(5)} = [4000, 500]$, $\bm{x}_{\rm AP}^{(6)} = [100, 4500]$, $\bm{x}_{\rm AP}^{(7)} = [1000, 400]$, $\bm{x}_{\rm AP}^{(8)} = [4000, 500]$, $\bm{x}_{\rm AP}^{(9)} = [4300, 4000]$, and $\bm{x}_{\rm AP}^{(10)} = [4500, 500]$. The \ac{roi}  has been positioned in the lower-right corner, corresponding to, following the same notation of Fig \ref{fig:mBS}, $d_1 = 3000$~m, $ d_2 = 1500$~m, and $\beta_1 = \beta_2 = 1000 $~m.  In this case, we have a single realization of any channel effect (path-loss, shadowing, fading, \ldots) per location, for a total of 8464 realizations, 5000 of which have been used for training and the rest for testing. For \ac{nn}, we set $L = 3$ and $N^{(i)} = 500$, $i = 1,2,3$. Fig. \ref{fig:Berlinnew} shows the \ac{roc} for both \ac{nn} and \ac{lssvm}. The  performance is in line with the other figures obtained by simulation. Still, due to the small size of the available training set, \acp{roc} are not smooth. Moreover, we notice that \ac{svm} and \ac{nn} achieve approximately the same performance.}\revi{revLI}{Note also that, in order to use \ac{eda}, we should first know the path-loss to convert the attenuation estimates into distances, an information not immediately available from the experimental data. Therefore we could not compare \ac{ml} with \ac{eda} in this case, further demonstrating the utility of \ac{ml} model-less techniques for \ac{irlv}.}

%By comparing the two figures, we note that for small values of $S$, the performance get slightly worse as $k_f$ grows. For a large enough training set size $S$, different values $k_f$ provide approximately the same performance.  Therefore, we can conclude that, for large training sets, \ac{ml} algorithms are also robust to fading irrespective of the number of collected fading realization per location. However, in practical situations where the points for training may be limited, it may be advantageous to collect more fading realization per location. 

\begin{figure}[t]
    \centering
    \includegraphics[width=8.5cm]{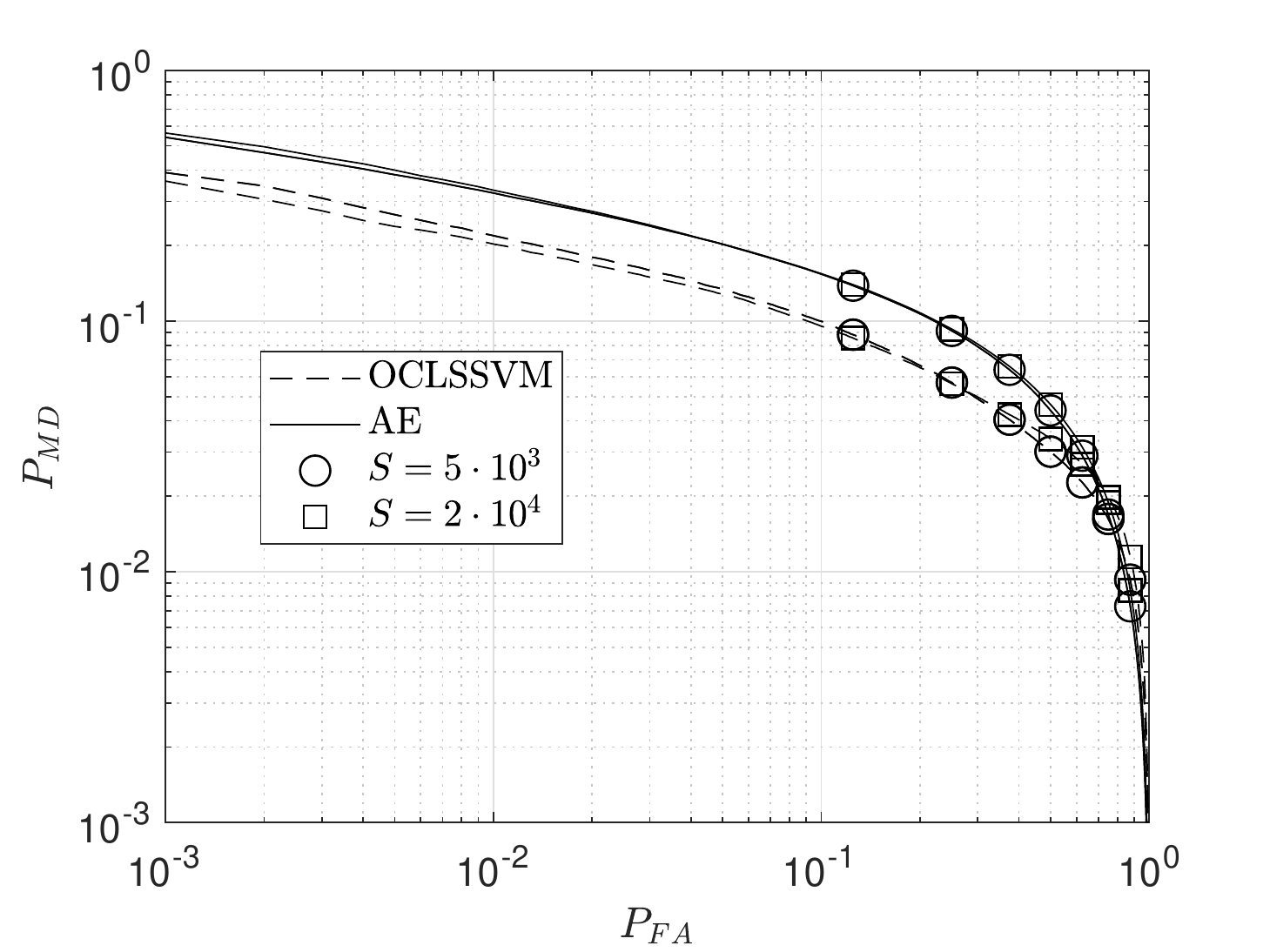}
    \caption{\ac{roc} for one-class \acp{irlv} for different training-set sizes. Environment of Fig \ref{fig:mBS} with $N_{\rm AP}=10$, $k_f=1$, and \ac{ae} with $N_L = 7$. }
    \label{fig:kf1Oc}
\end{figure}

\subsection{One-Class \ac{irlv} With Multiple \acp{ap}}\label{sec:numResOneClass}

% \begin{figure}[t]
%     \centering
%     \includegraphics[width=8.5cm]{noFading_10BS-eps-converted-to.pdf}
%     \caption{\ac{roc} for one-class \acp{irlv} in a scenario with path-loss, shadowing and without fading.}
%     \label{fig:aeNh}
% \end{figure}

We now focus on the one-class \ac{irlv} solutions, described in Section~\ref{sec:OneClass}, where the training points come only from the \ac{roi} $\mathcal A_0$. \revi{designAE}{The \ac{ae} has been designed according to  \cite{Hinton-2006}, i.e., all neurons use the logistic sigmoid as activation function except for those in the central hidden layer, using linear activation functions. The \ac{ae} has $N_L = 7$ hidden layers with 7, 6, 3, 2, 3, 6, and 7 neurons, respectively. Weights are initialized randomly.} The channel model is  described in Section~II, for the environment of Fig. \ref{fig:mBS} (with $N_{\rm AP} =10$), and the parameters of Section~\ref{sec:res_fading}, with $d_1=50$~m, $d_2=50$~m, and $\beta_1 = \beta_2 = 150$~m. 

% Fig. \ref{fig:aeNh} shows the \ac{roc} for both \ac{oclssvm} and \ac{ae} with $N_h \in [1, 5]$ and $S=10^4$ training vectors. We see that, by increasing $N_h$, the performance of \ac{ae}-based \ac{irlv} does not improve: indeed, the optimum \ac{roc} is obtained for $N_h=2$. This is due to how \ac{ae} compresses the attenuation vectors: the best performance are achieved when it extracts the optimal number of features from the input. As we have seen, the one-class solutions are not optimal in general, and we clearly see that \ac{oclssvm} is significantly more powerful than \ac{ae}, as the obtained \ac{roc} achieves a lower $P_{\rm MD}$ for the same $P_{\rm FA}$.
% \begin{figure}[t]
%     \centering
%     \includegraphics[width=8.5cm]{res_Training_5BS_oneClass.eps}
%     \caption{\ac{roc} for one-class \acp{irlv} in a scenario with $N_{\rm AP}=5$, path-loss, shadowing and fading with different training set size and $k_f=1$ fading realizations,  \ac{ae} with $N_h = 2$. }
%     \label{fig:kf1Oc5bs}
% \end{figure} 
 
\revi{fadingRes}{Here, we consider the effects of fading and the choice of the number of training points $S$. Fig. \ref{fig:kf1Oc} shows the \ac{roc} for one-class \ac{irlv} systems for $k_f = 1$ and two values of $S$. We first notice that both \ac{ae} and \ac{oclssvm} converge for $S = 5 \cdot 10^{3}$, and the \ac{svm}-based solution outperforms the \ac{nn}-based solution, as already seen in the case of two-class classification. Fig. \ref{fig:kf10Oc} shows the \ac{roc} for $k_f=1$ and 10, while $n_x=2 \cdot 10^4$. We note that, for both \ac{ml} techniques, averaging over fading significantly improves the performance.} \revi{revLI3}{We also report the performance of \ac{eda} obtained without fading and assuming the knowledge of the path-loss relation between attenuation and distance. Again, we note that the proposed \ac{ml} techniques significantly outperform \ac{eda} (in the absence of fading).} In the figure we also report the performance of two-class \ac{svm} for channels without fading: we can observe that, in the considered scenario, two-class \ac{irlv} outperforms the one-class \ac{irlv}: the former achieves a lower $P_{\rm MD}$ for the same $P_{\rm FA}$. This result is expected since the two-class \ac{irlv} also exploits the (estimated) statistics of attenuation while under attacks.
% \begin{figure}[t]
%     \centering
%     \includegraphics[width=8.5cm]{res_Fading_5BS_oneClass.eps}
%     \caption{\ac{roc} for one-class \acp{irlv} in a scenario with $N_{\rm AP}=5$, path-loss, shadowing and fading with different training set size and $k_f=10$ fading realizations,  \ac{ae} with $N_h = 2$. }
%     \label{fig:kf10Oc5bs}
% \end{figure}

\begin{figure}[t]
    \centering
    \includegraphics[width=8.5cm]{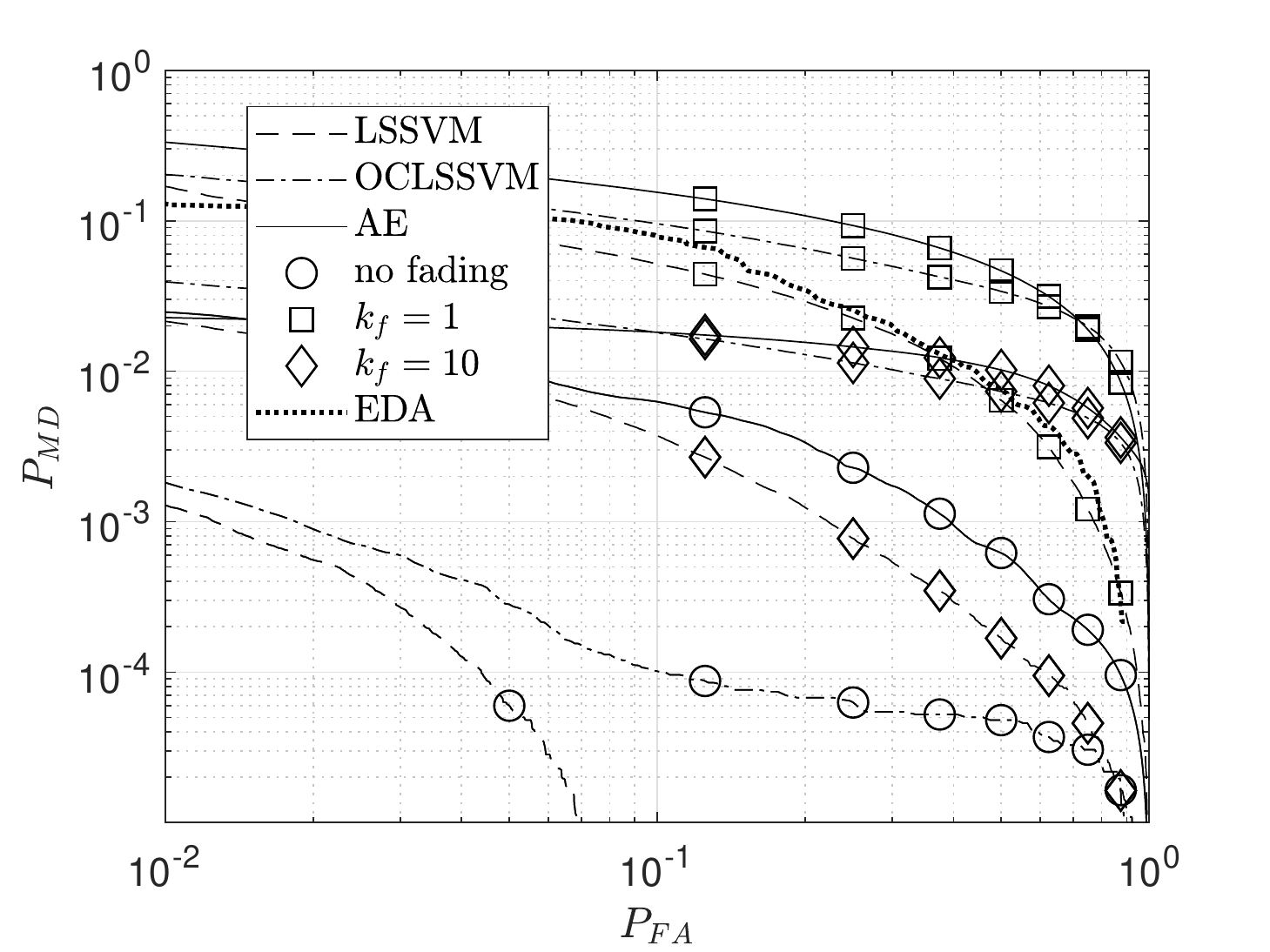}
    \caption{\ac{roc} of one-class \acp{irlv} for different values of $k_f$. Environment of Fig. \ref{fig:mBS} with $N_{\rm AP}=10$, and $n_x= 2 \cdot 10^4$,  \ac{ae} with $N_L = 7$.  }
    \label{fig:kf10Oc}
\end{figure}

\section{Conclusions}

In this paper, we have proposed innovative solutions for \ac{irlv} in wireless networks that exploit the features of the channels between the \ac{ue} whose location must be verified by a trusted network of \acp{ap}. By observing that in typical situations the channel statistics are not available for \ac{irlv}, we have proposed \ac{ml}-based solutions, operating with both one- and two-class classification, i.e., with and without a-priori assumptions on attack statistics. For two-class classification we have proved that  both \ac{nn} and \ac{svm} solutions  are the most powerful tests for a given sensitivity, i.e., they are equivalent to the \ac{np} test. Instead, for one-class classification both \ac{ae} and \ac{svm} solutions are not equivalent to the \ac{glrt}. We have also investigated how to collect the training points in order to be robust against the channel fading.

\appendices

\section{LLRs derivation}
\label{sec:llrDer}

\subsubsection{Uncorrelated Fading scenario}
\revi{simpleScen3}{Assuming spatially uncorrelated Rayleigh fading, without shadowing (\ie $\sigma_{s,\rm dB}=0)$, given a \ac{ue} located at distance $d$, the channel gain $g=1/a$ is exponentially distributed with mean (in dB) $P_{\rm PL,LOS}(d)$ given by $\eqref{eq:los}$. Letting}
\begin{equation}
    F(\Delta,R_0,R_1) = 
    \frac{2}{\Delta}\int_{R_0}^{R_1} 10^{P_{{\rm PL},{\rm LOS}}(d_0)/10} \exp\left(-10^{P_{{\rm PL},{\rm LOS}}(d_0)/10} \frac{1}{a}\right)d_0 \, {\tt d}d_0,
\end{equation}
\revi{simpleScen3_1}{from the uniform \ac{ue} distribution and \eqref{eq:prc} we have $p(a|\mathcal{H}_0)=F(\Delta_0,R_{\rm min},R_{\rm in})$, whereas  $p(a|\mathcal{H}_1) = F(\Delta_1,R_{\rm in},R_{\rm out})$.}
\revi{simpleScen3_2}{By computing integrals for path-loss coefficient $\nu = 2$, the \ac{llr} is} 
\begin{equation}\label{eq:llr1}
   \mathcal{M}(a) =
   \ln\left(\frac{R^2-R_{\rm min}^2}{R_{\rm in}^2-R_{\rm in}^2}\frac{\mathcal{V}(R_{\rm min},a)-\mathcal{V}(R_{\rm in},a)}{\mathcal{V}(R_{\rm in},a)-\mathcal{V}(R,a)}\right),
\end{equation}
 \begin{equation}
\mathcal{V}(d_0,a) = \exp\left(-\frac{1}{a}\left(\frac{4 \pi f_0 d_0}{c}\right)^2\right) \left(\frac{1}{a}\left(\frac{4 \pi f_0 d_0}{c}\right)^2+1\right) .   
\end{equation}

\revi{simpleScen3_3}{Let $\Gamma(\gamma,b)= \int_{b}^{\infty}t^{\gamma-1}e^{-t} dt$ be the incomplete gamma function, then for $\nu=3$ we have instead}
\begin{equation}\label{eq:llr2}
 \mathcal{M}(a) =
 \ln\left(\frac{R^2-R_{\rm in}^2}{R_{\rm in}^2-R_{\rm min}^2}\frac{\Gamma\left(\frac{5}{3},\frac{1}{a}\left(\frac{4 \pi f_0}{c}\right)^3 R_{\rm min}^3\right)-\Gamma\left(\frac{5}{3},\frac{1}{a}\left(\frac{4 \pi f_0}{c}\right)^3 R_{\rm in}^3\right)}{\Gamma\left(\frac{5}{3},\frac{1}{a}\left(\frac{4 \pi f_0}{c}\right)^3 R_{\rm in}^3\right)-\Gamma\left(\frac{5}{3},\frac{1}{a}\left(\frac{4 \pi f_0}{c}\right)^3 R^3\right)}\right),
\end{equation}
 
\subsubsection{Uncorrelated shadowing scenario}
\revi{simpleScen4}{Assuming spatially uncorrelated shadowing, without fading   we have $10\log_{10}a^{(n)}=P^{(n)}_{\rm PL}+s$, \ie the received power from a given location is distributed in the logarithmic domain as a Gaussian random variable with mean value given by the path-loss (\ref{eq:los}) and standard deviation $\sigma_{s,\rm dB}$. Letting}
\begin{equation}\label{eq:sh}
   G(\Delta,R_0,R_1) = 
    \frac{2}{\Delta}\int_{R_0}^{R_1} \exp\left(-\frac{1}{2}\frac{\left(\frac{1}{a}+10\nu\log_{10}\left(\frac{4 \pi f_0 d_0}{c}\right)\right)^2}{\sigma_{s,\rm dB}^2}\right) d_0 \, {\tt d} d_0, 
\end{equation}
\revi{simpleScen4_1}{from (\ref{eq:prc}), the \ac{pdf} of incurring an attenuation $a$ in hypothesis $\mathcal{H}_0$ is $p(a|\mathcal{H}_0)=G(\Delta_0,R_{\rm min}, R_{\rm in})$, and $p(a|\mathcal{H}_1) = G(\Delta_1, R_{\rm in},R_{\rm out})$.}
\revi{simpleScen4_2}{By solving the integral in (\ref{eq:sh}) we obtain the \ac{llr}}
\begin{equation}\label{eq:llr3}
    \mathcal{M}(a) = \ln\left(\frac{R_{\rm out}^2}{R_{\rm in}^2} \frac{\mathcal{T}(R_{\rm in})-\mathcal{T}(R_{\rm min})}{\mathcal{T}(R_{\rm out})-\mathcal{T}(R_{\rm in})}\right),
\end{equation}
\revi{simpleScen4_3}{where $\erf(x)= \frac{2}{\sqrt{\pi}}\int_0^x e^{-t^2} {\tt d}t$ is the error function and}
\begin{equation}
    \mathcal{T}(d_0) = \erf\left( \frac{\frac{100 \nu^2}{\sigma_{s,\rm dB}^2}\ln d_0-\ln^2(10)+\frac{\frac{1}{a} 10 \nu \ln 10}{2\sigma_{s,\rm dB}^2}}{\sqrt{1/2\sigma_{s,\rm dB}^2}10\nu\ln 10}\right).
\end{equation}

\section{Proof of Theorem 3}\label{sec:proofTh3}

	Given a finite  attenuation vector alphabet $\mathcal C = \{\bm{\alpha}_1, \ldots, \bm{\alpha}_M\}$ of $M$ elements, with $\bm{a}^{(i)} \in \mathcal C$, we indicate with $p_{\bm{a}^{(i)},t_i}(\bm{\alpha}_j, t)$, with $t \in \{-1,1\}$, the joint probability of input vector $\bm{a}^{(i)}$ and corresponding output $t_i$, $i=1, \ldots, S$.
	
	By the Glivenko–Cantelli theorem we have that with probability 1 as $S\rightarrow \infty$ there are $Sp_{\bm{a}^{(i)},t_i}(\bm{\alpha}_j,t)$ training vectors $\bm{\alpha}_j$ with associated true label $t$ in any training sequence.
	All these training points will have the same error values $\epsilon_j$, from (\ref{eq:stpart}), that will appear $Sp_{\bm{a}^{(i)},t_i}(\bm{\alpha}_j,t)$ times in the sum $\sum_{i=1}^{S} e_i^2$.
	Note that in the training ensemble there could be two equal instances $\bm{a}^{(m)}=\bm{a}^{(n)}=\bm{\alpha}_j$, but with different labels $t_m \neq t_n$. Therefore, for a given $\bm{\alpha}_j$ we can have two possible errors, depending on $t_i$, and we denote them with $\epsilon_{j,1}$ and $\epsilon_{j,-1}$.
	This translates into only $2M$ \textit{distinct} constraints of type \eqref{eq:stpart}.
	Asymptotically, for $S \to \infty$, problem (\ref{eq:lssvm}) becomes
		\begin{equation}
		\label{eq:lssvm2}
		\underset{\bm{w},e}{\text{min}} \quad f_l' \triangleq \frac{1}{2} \bm{w}^T \bm{w} + C S \frac{1}{2} \sum_{j=1}^M [p_{\bm{a}^{(i)},t_i}(\bm{\alpha}_j,1) \epsilon_{j,1}^2 + p_{\bm{a}^{(i)},t_i}(\bm{\alpha}_j,-1) \epsilon_{j,-1}^2]  
		\end{equation}
		subject to 
$
		[\bm{w}^T \phi (\bm{\alpha}_j) + b] = 1- \epsilon_{j,1}
$
and
$
-[\bm{w}^T \phi (\bm{\alpha}_j) + b] = 1- \epsilon_{j,-1}\quad j = 1 ,\dots,M,
$
	whose solution provides the convergence value (in probability) of vector $\bm{w}$. We write the Lagrangian
	\begin{equation}
	\mathcal{L}_1 = f_l' - \sum_{j=1}^{M} v_j \left[ \bm{w}^T \phi (\bm{\alpha}_j) + b - 1 + \epsilon_{j,1} \right] 
	- \sum_{j=1}^{M} u_j \left[- \bm{w}^T  \phi (\bm{\alpha}_j) - b  - 1 + \epsilon_{j,-1} \right], 
	\end{equation}
	where $\{u_k,v_k\}_{k=1}^{M}$ are the Lagrangian multipliers. By setting to zero the derivatives with respect to $\{\bm{w},b,\epsilon_{j,1},\epsilon_{j,-1}, v_j,u_j\}$  we get the system of equations
	\begin{subequations}
		\label{eq:system1}
		\begin{equation}
		\sum_{k=1}^{M} (u_k - v_k) k(\phi (\bm{\alpha}_k,\bm{\alpha}_j)) + b - 1 + \frac{v_j}{CSp_{\bm{a}^{(i)},t_i}(\bm{\alpha}_j,1)} = 0
		\quad j=1\dots M,
		\end{equation}
		\begin{equation}
		- \sum_{k=1}^{M} (u_k - v_k) k(\phi (\bm{\alpha}_k,\bm{\alpha}_j)) - b - 1 + \frac{v_j}{CSp_{\bm{a}^{(i)},t_i}(\bm{\alpha}_j,-1)} = 0
		\quad j=1,\dots, M,
		\end{equation}
		\begin{equation}
		\sum_{k=1}^{M} (u_k - v_k) = 0.
		\end{equation}
	\end{subequations}
	Note that \eqref{eq:system1} is a system with $2M + 1$ equations, linear in the $2M + 1$ unknowns $\{u_k,v_k,b\}_{k=1}^{k=M}$ and therefore has finite solution. In particular, we have
	\begin{equation}
	\label{eq:wSolution}
	\bm{w}^T\bm{w} =  \sum_{k=1}^{M} \sum_{h=1}^{M} k(\bm{\alpha}_k,\bm{\alpha}_h) (v_kv_h + u_ku_h -2 v_ku_h),
	\end{equation}
	where we used the fact that the kernel function
$
	k(\bm{\alpha}_k,\bm{\alpha}_h) \triangleq \phi(\bm{\alpha}_k) \phi(\bm{\alpha}_h)^T
$
	 is symmetric \wrt its inputs. 	We conclude that $\bm{w}$ has a finite norm since the right hand side of \eqref{eq:wSolution} is a finite sum.

%\bibliographystyle{IEEEtran}
%\bibliography{bibliography.bib}
%\renewcommand*{\bibfont}{\footnotesize}

\bibliographystyle{IEEEtran}
\bibliography{bibliography}
\end{document}